\documentclass[a4paper]{article}

\usepackage[utf8]{inputenc}
\usepackage[T1]{fontenc}   
\usepackage[english]{babel} 
\usepackage{amsmath}
\usepackage{amsfonts}
\usepackage{amssymb}
\usepackage{amsthm}
\usepackage{euscript}
\usepackage{tikz}
\usepackage{url}
\usepackage{a4wide}
\usepackage{enumerate}
\usepackage{graphicx}
\usepackage{mathdots}

\theoremstyle{plain}
\newtheorem{thm}{Theorem}
\newtheorem{prop}[thm]{Proposition}
\newtheorem{cor}[thm]{Corollary}
\newtheorem{lem}[thm]{Lemma}

\theoremstyle{definition}
\newtheorem{defn}[thm]{Definition}
\newtheorem{ex}{Example}

\theoremstyle{remark}
\newtheorem{rem}{Remark}
\newtheorem{nota}{Notation}

\newcommand{\G}{\mathfrak{G}}
\renewcommand{\H}{\mathfrak{H}}

\newcommand{\Z}{\mathbb{Z}}

\newcommand{\F}{\mathbb{F}}

\newcommand{\B}{\mathbb{B}}
\newcommand{\Sp}{\mathbb{S}}
\newcommand{\Rk}{\textrm{Rk}}

\newcommand{\coker}{\textrm{Coker}\ }
\newcommand{\M}{\mathfrak{M}}
\newcommand{\C}{\mathfrak{C}}
\renewcommand{\d}{\mathfrak{d}}
\renewcommand{\b}{\mathfrak{b}}
\renewcommand{\c}{\mathfrak{c}}
\renewcommand{\a}{\mathfrak{a}}
\newcommand{\s}{\mathfrak{s}}
\newcommand{\x}{\mathfrak{x}}

\DeclareMathOperator{\Ker}{Ker}
\DeclareMathOperator{\im}{Im}

\title{A construction of Quantum LDPC codes from Cayley graphs}

\author{Alain Couvreur
\thanks{INRIA Saclay \^Ile-de-France \& Laboratoire CNRS LIX --- \'Ecole Polytechnique, Route de Saclay --- 91128 Palaiseau Cedex, France. {\tt alain.couvreur@inria.fr}} 
\and Nicolas Delfosse \thanks{Institut de Math\'ematiques de Bordeaux, UMR 5251, Universit\'e Bordeaux 1 --- 351, cours de la Lib\'eration --- 33405 Talence Cedex, France. {\tt nicolas.delfosse@math.u-bordeaux1.fr,} {\tt gilles.zemor@math.u-bordeaux1.fr}}
\and Gilles Z\'emor \footnotemark[2]
}

\begin{document}
\maketitle

\begin{abstract}
We study a construction of Quantum LDPC codes proposed by MacKay, Mitchison and Shokrollahi. It is based on the Cayley graph of $\F_2^n$ together with a set of generators regarded as the columns of the parity--check matrix of a classical code.
We give a general lower bound on the minimum distance of the Quantum code in $\mathcal{O}(dn^2)$ where $d$ is the minimum distance of the classical code.
When the classical code is the $[n,1,n]$ repetition code, we are able to compute the exact parameters of the associated Quantum code which are $[[2^{n}, 2^{\frac{n+1}{2}}, 2^{\frac{n-1}{2}}]]$. 
\end{abstract}

\bigskip

\noindent {\bf MSC:} 94C15, 05C99, 94B99

\noindent {\bf Key words:} Quantum codes, LDPC codes, Cayley Graphs, Graph covers.

\bigskip

\noindent{\bf Notes.} The material in this paper was presented in part at ISIT 2011 \cite{cdz11}.
This article is published in {\em IEEE Transactions on
Information Theory} \cite{CDZ13}.
We point out that the second step of the proof of
Proposition VI.2 in the published version (Proposition~\ref{prop:wrong_proof}
in the present version and Proposition 18 in the ISIT extended abstract
\cite{cdz11}) is not strictly correct. This issue is addressed
in the present version.

\section{Introduction}

Classical LDPC codes, it hardly needs to be recalled, come together with
very efficient and fast decoding algorithms and overall display extremely
good performance for a variety of channels. Quantum error-correcting codes
on the other hand, under the guise of the CSS \cite{CS96a,Ste96b} scheme,
are in some ways strikingly similar to classical codes, and in particular
can be decoded with purely classical means. It is therefore natural to
try to import the classical LDPC know-how to the Quantum setting. There is
however a structural obstacle. A Quantum CSS code is defined by two
binary parity-check matrices whose row-spaces must be orthogonal to each other.
To have a Quantum LDPC code decodable by message-passing these two matrices
should be sparse, as in the classical case. Therefore, randomly choosing
these matrices, the generic method which works very well in the classical case,
is simply not an option in the Quantum case, because the probability of
finding two sparse row-orthogonal matrices is extremely small. A number of
constructions have been suggested by classical coding theorists nevertheless
\cite{MMM04, Al07, Al08, COT07, HMIH07, SKR08}
but they
do not produce families of Quantum LDPC codes with a minimum distance growing
with the blocklength. While this may be tolerable for practical
constructions of fixed size, this is clearly an undesirable feature
of any asymptotic construction and it raises the intriguing theoretical
question of how large can the minimum distance of sparse (or LDPC) CSS codes be.
Families of sparse CSS codes with a growing minimum distance do exist, the
most well-known of these being Kitaev's toric code \cite{Kit03a}, which has
been generalised to codes based on tesselations of surfaces (see \emph{e.g.}
\cite{BK98, FML02a, BM06, BM07a, Ze09}) and higher-dimensional objects. These constructions 
exhibit minimum distances that scale at most as a square root of the 
blocklength $N$ (to be precise, $N^{1/2}\log N$ is achieved in \cite{FML02a})
though this often comes at the cost of a very low dimension (recall
that the dimension of the toric code is $2$). It is an open question
as to whether families of sparse CSS codes exist with a minimum distance
that grows at least as $N^\alpha$ for $\alpha >1/2$, even for Quantum
codes with dimension $1$.
The recent construction \cite{tz09} manages to reconcile a minimum distance of the order of $N^{1/2}$
with a dimension linear in the blocklength. All these constructions
borrow ideas from topology and can be seen as some generalisation of Kitaev's
toric code. 

In a follow-up to the paper \cite{MMM04a}
MacKay, Mitchison and Shokrollahi \cite{MMS} proposed a construction
that seemingly owes very little to the topological approach. They noticed
that the adjacency matrix of any Cayley graph over $\F_2^r$ with an even
set of generators is self-dual and can therefore be used to define a sparse
CSS code. Experiments with some Cayley graphs were encouraging. In the present
work we take up the theoretical study of the parameters of these CSS
codes which was left open by MacKay \emph{et al.} The Quantum code in the
construction is defined by a classical $[n,k,d]$ linear binary code where
$n$ must be even.
Its length is $N=2^{n-k}$, and the row-weight of the parity-check matrix
is $n$. The dimension and the minimum distance of the Quantum code
does not depend solely on the classical code's parameters, but depend more
subtly on its structure. We solve the problem in the first non-trivial
case, which was an explicit question of MacKay \emph{et al.}, namely the case
when the classical code is the $[n,1,n]$ repetition code. Computing
the parameters of the associated Quantum code turns out to be not easy,
even in this apparently simple case. Our main result, Theorem \ref{thm:main},
gives the exact parameters for this Quantum code, namely:
$$
[[N = 2^{n}, K = 2^{\frac{n+1}{2}}, D = 2^{\frac{n-1}{2}}]].
$$
The construction therefore hits the $N^{1/2}$ barrier for the minimum distance,
but it is quite noteworthy that it does so using a construction that breaks
significantly with the topological connection. For Quantum codes based
on more complicated classical $[n,k,d]$ structures, similarly precise
results seem quite difficult to obtain, but we managed to prove a lower
bound on the Quantum minimum distance of the form $D\geq adn^2$
for some constant $a$ (Theorem \ref{thm:min_dist}). 

Notice that the constructed quantum LDPC codes have not a constant row-weight. 
Indeed, this weight is logarithmic in the blocklength.
This has its drawbacks since decoding will be slightly more complex, we
remark however that the best families of classical LDPC codes (i.e.
capacity-achieving LDPC codes) all
have row weights that grow logarithmically in the block length. We note also that it
was recently proved in~\cite{DZ}
that quantum LDPC stabilizer codes cannot achieve the capacity of the quantum
erasure channel if their stabilizer matrices have constant row weight.

\subsection*{Outline of the article} Some prerequisites on Quantum and
Quantum CSS codes together with some basic notions on Cayley graphs
are recalled in Section \ref{sec:prelim}. In Sections \ref{sec:basic}, we describe some basic properties of Cayley graphs
associated to the group $\F_2^n$.
In Section \ref{sec:Hamm}, we focus on the
properties of the {\it Hamming hypercube}, that is the Cayley graph
$\G (\F_2^n, S_n)$, where $S_n$ denotes the canonical basis. In
particular, we observe some nice property: for almost all families $S$ of generators of $\F_2^m$, the Cayley graph $\G (\F_2^{m}, S)$ looks locally like the Hamming hypercube of dimension $\# S$.
In Section \ref{sec:dist}, we study the minimum distance of a Quantum code associated to a Cayley graph of $\F_2^n$ and show that this distance is at least quadratic in $n$. Finally in Section \ref{sec:repet}, we focus on the example studied by Mitchison {\it et al.} in \cite{MMS} and give the exact parameters of this family of Quantum codes.

\section{Preliminaries}\label{sec:prelim}

In this article all codes, classical and quantum, are binary.

\subsection{Self-Orthogonal Codes and Quantum codes}



\begin{defn}
  A classical code $C\in \F_2^n$ is said to be \emph{self-orthogonal} if $C \subset C^\bot$. It is said to be self-dual if $C=C^\bot$. For convenience's sake, we also say that a binary $r\times r$ matrix $H$ is \textit{self-orthogonal} (resp. \textit{self-dual}) if $HH^T=0$ (resp. $HH^T$ and $\Rk (H)=r/2$).
\end{defn}

Classical self-orthogonal codes provide a way of constructing quantum
codes through a particular case of the CSS construction
\cite{CS96a,Ste96b}. Let us just recall that if $C$ is self-orthogonal
with classical parameters $[n,k,d]$, then it yields a quantum code with
 parameters $[[N , K, D]]$, where $N=n$, $K=n-2k$ and 
where $D$ is the minimum weight of a codeword in $C^\bot \setminus C$.

Notice that this last characterization of $D$ implies that $D\geq
d^\perp$ where $d^\perp$ denotes the dual distance of $C$. One way of
obtaining quantum codes with good parameters is therefore simply to
use classical self-orthogonal codes with a large dual
distance: this approach has been used repeatedly to obtain record
parameters. 
However, our purpose is to construct CSS codes with a
low-density stabilizer (parity-check) matrix, meaning that we need
a {\em sparse} self-orthogonal matrix~$H$. Since we have $d^\perp \leq d$ for the
self-orthogonal code $C$ generated by the rows of $H$, the bound $D\geq
d^\perp$ is of little use because it cannot bound $D$ from below by
anything more than the (low) weight of the rows of $H$. Obtaining a
better lower bound on the quantum code's minimum distance~$D$ can be quite challenging.


In the present work we shall develop a method to obtain improved lower
bounds on $D$ for some quantum codes based on sparse self-orthogonal matrices.
We focus on MacKay et al.'s construction based on the adjacency matrices of some Cayley graphs. Let us first recall some basic notions on Cayley Graphs.

\subsection{Cayley graphs and CSS codes}

\subsubsection{The general construction}

\begin{defn}
  Let $G$ be a group and $S$ be a subset of $G$. The Cayley graph $\G (G,S)$ or $\G (S)$, when there is no possible confusion, is the graph whose vertex-set equals $G$ and such that two vertices $g, g'\in G$ are connected by an edge if there exists $s\in S$ such that $gs=g'$. 
\end{defn}

\begin{rem}\label{rem:oriented_symmetric}
  The graph $\G (G, S)$ is oriented unless $S^{-1}= S$. In addition, if $S^{-1} =S$, then, the adjacency matrix of the graph is symmetric.
\end{rem}

\begin{rem}
The graph $\G (G, S)$ is connected if and only of $S$ generates $G$.  
\end{rem}

Our point is to get pairs $(G,S)$ such that the adjacency matrix $H$ of $\G (G, S)$ is self--orthogonal, i.e. such that $HH^T=0$.
Notice that  $HH^T=0$  happens if and only if both conditions are satisfied.

\begin{enumerate}[(1)]
\item\label{item:row_self} Each row of $H$ is self-orthogonal, i.e. has even weight;
\item\label{item:row_orth} Any pair of distinct rows of $H$ are orthogonal, i.e. any two distinct rows of $H$ have an even number of $1$'s in common.
\end{enumerate}

The following proposition translates the above conditions in terms of the pair $(G,S)$.

\begin{prop}\label{prop:How_to}
  Let $G$ be a finite group and $S$ be a system of generators of $G$. Assume that 
  \begin{enumerate}[(i)]
  \item\label{item:even} $\# S$ is even;
  \item\label{item:expressions} for all $g\in G$, there is an even number of distinct expressions of $g$ of the form $g=st^{-1}$, with $(s,t)\in S^2$.
  \end{enumerate}
Then, the adjacency matrix of the Cayley graph $\G (G, S)$ is self-orthogonal.
\end{prop}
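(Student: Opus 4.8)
The plan is to check directly the two conditions (\ref{item:row_self}) and (\ref{item:row_orth}) stated just above the proposition, since, as recalled there, $HH^T=0$ is equivalent to (\ref{item:row_self}) together with (\ref{item:row_orth}). Write $H=(H_{g,g'})_{g,g'\in G}$ for the adjacency matrix of $\G(G,S)$, so that $H_{g,g'}=1$ precisely when $g^{-1}g'\in S$. The whole argument amounts to translating intersections of rows of $H$ into factorisations in $G$; the one thing to keep track of is the side on which inverses appear, because $G$ need not be abelian.

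For condition (\ref{item:row_self}): the row indexed by $g$ carries a $1$ in column $g'$ exactly for the $g'$ with $g^{-1}g'\in S$, and $g'\mapsto g^{-1}g'$ is a bijection of $G$, so this row has weight $\#S$, which is even by hypothesis (\ref{item:even}). For condition (\ref{item:row_orth}), fix distinct vertices $g\neq h$ and count the columns where rows $g$ and $h$ both carry a $1$, namely
\[
\#\{\,g'\in G:\ g^{-1}g'\in S\ \text{and}\ h^{-1}g'\in S\,\}.
\]
For such a $g'$, put $s=g^{-1}g'$ and $t=h^{-1}g'$; then $gs=g'=ht$, whence $st^{-1}=g^{-1}h$, and conversely $g'$ is recovered from $(s,t)$ as $g'=gs$ (one checks $h^{-1}g'=t\in S$ using $h^{-1}g=ts^{-1}$). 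This gives a bijection between the columns counted above and the set of pairs $(s,t)\in S^2$ with $st^{-1}=g^{-1}h$. Since $g\neq h$, the element $g^{-1}h$ is not the identity of $G$, so hypothesis (\ref{item:expressions}), applied to $g^{-1}h$, says this set has even cardinality. Hence any two distinct rows of $H$ intersect in an even number of positions, which is (\ref{item:row_orth}); combined with (\ref{item:row_self}) this yields $HH^T=0$.

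There is no genuine obstacle here beyond the group-theoretic bookkeeping: the only point that needs care is verifying that the correspondence $g'\leftrightarrow(s,t)$ is indeed a bijection and that the inverses end up on the correct side of the products, so that the resulting count is exactly the quantity appearing in hypothesis (\ref{item:expressions}) rather than some variant of it.
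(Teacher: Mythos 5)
Your proof is correct and follows essentially the same route as the paper: reduce $HH^T=0$ to conditions (\ref{item:row_self}) and (\ref{item:row_orth}), note that each row has weight $\#S$, and put the common $1$'s of two distinct rows $g,h$ in bijection with the pairs $(s,t)\in S^2$ satisfying $st^{-1}=g^{-1}h$, so that hypothesis (\ref{item:expressions}) gives the even intersection. The only difference is that you spell out the bijection and the placement of inverses more explicitly than the paper does, which is a welcome precaution but not a new idea.
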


\begin{proof}
  Condition (\ref{item:even}) entails obviously (\ref{item:row_self}).
Now, let $a,b$ be two distinct elements of $G$ and $H_a, H_b$ the corresponding rows of the adjacency matrix of $\G (G, S)$.
The rows have a $1$ in common if and only if $at=bs$ for some pair $(s,t) \in S^2$. This equality is equivalent with $b^{-1}a=st^{-1}$. Thus, (\ref{item:expressions}) naturally entails (\ref{item:row_orth}).
\end{proof}

\begin{rem}
  If $S^{-1}=S$, then the graph in undirected, its adjacency matrix is symmetric and (\ref{item:expressions}) can be replaced by 
  \begin{enumerate}
  \item[(ii')]  for all $g\in G$, there is an even number of distinct expressions of $g$ of the form $g=st$, with $(s,t)\in S^2$.
  \end{enumerate}
\noindent It is worth noting that if $s$ and $t$ commute, then $g=st$ and $g=ts$ correspond to distinct expressions.
\end{rem}

\subsubsection{The group algebra point of view}
We still consider a pair $(G,S)$, where $G$ is a group and $S$ is a generating set of $G$. Recall that the \emph{group algebra of $G$ over $\F_2$} denoted by $\F_2[G]$ is the $\F_2$--vector space with a basis $\{e_g,\ g\in G\}$ in one--to--one correspondence with elements of $G$ together with a multiplication law induced by the group law, i.e. $e_g.e_{g'}=e_{gg'}$.

\begin{nota}\label{nota:pi_s}
Given a pair $(G, S)$, where $G$ is a group and $S$ a generating set.
We denote respectively by $\pi_S$ and $\hat{\pi}_S$ the elements of $\F_2 [G]$,
$$
\pi_S:=\sum_{s\in S} s \qquad \textrm{and}\qquad \hat{\pi}_S:= \sum_{s\in S} s^{-1}.
$$
Clearly, the two elements are equal when $S=S^{-1}$.
\end{nota}

\begin{lem}\label{lem:mult_pi}
  The adjacency matrix $H$ of $\G (G, S)$ represents the right multiplication by $\pi_S$ i.e. the application
$$
\phi_S: \left\{
  \begin{array}{ccc}
    \F_2 [G] & \rightarrow & \F_2 [G] \\
     f & \mapsto & f \pi_S
  \end{array}
\right. .
$$
In addition, the matrix $H^T$ represents the right multiplication by $\hat{\pi}_S$.
\end{lem}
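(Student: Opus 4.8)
The plan is to verify directly that the linear map $\phi_S$ defined by right multiplication by $\pi_S$ has, in the canonical basis $\{e_g : g \in G\}$, exactly the adjacency matrix $H$ of $\G(G,S)$ as its matrix; then the statement about $H^T$ follows by the same computation applied to $\hat\pi_S$, together with the observation that these two matrices are transposes of one another.

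First I would fix the ordering of the basis of $\F_2[G]$ to be $\{e_g : g \in G\}$ and compute the image of a basis vector: $\phi_S(e_g) = e_g\,\pi_S = e_g \sum_{s\in S} e_s = \sum_{s\in S} e_{gs}$. Reading off the matrix of $\phi_S$ column by column, the $(h,g)$ entry is $1$ if and only if $h = gs$ for some $s \in S$, i.e.\ if and only if there is an edge between $g$ and $h$ in $\G(G,S)$ in the sense of the definition ($gs = h$). Hence the matrix of $\phi_S$ is exactly the adjacency matrix $H$ of $\G(G,S)$. One small point worth being careful about is multiplicity: if two distinct generators $s,s'\in S$ satisfy $gs = gs'$, i.e.\ $s=s'$, this cannot happen since $S$ is a set, so each off-diagonal entry is genuinely $0$ or $1$ and the correspondence with the (simple) Cayley graph is exact.

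For the second assertion, the same computation with $\hat\pi_S = \sum_{s\in S} e_{s^{-1}}$ gives that right multiplication by $\hat\pi_S$ has $(h,g)$ entry equal to $1$ iff $h = g s^{-1}$ for some $s\in S$, equivalently $g = hs$ for some $s \in S$, equivalently the $(g,h)$ entry of $H$ is $1$. Thus the matrix of right multiplication by $\hat\pi_S$ is $H^T$. Alternatively one can phrase this via the $\F_2$-bilinear form on $\F_2[G]$ for which $\{e_g\}$ is orthonormal: the adjoint of $\phi_S$ with respect to this form is right multiplication by $\hat\pi_S$, since $\langle e_g \pi_S, e_h\rangle = \langle e_g, e_h \hat\pi_S\rangle$ (both count pairs with $gs = h$).

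I do not expect any real obstacle here; the only thing to watch is bookkeeping conventions — whether one multiplies on the left or the right, and whether the matrix acts on column or row vectors — so that the claim ``$H$ represents right multiplication by $\pi_S$'' is stated consistently with the convention that the edge relation is $gs = g'$ used in the definition of $\G(G,S)$. Once the convention is pinned down, the verification is the one-line computation above.
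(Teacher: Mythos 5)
Your proof is correct and is exactly the intended argument: the paper states this lemma without proof (regarding it as immediate), and your direct computation of $\phi_S(e_g)=\sum_{s\in S}e_{gs}$ together with reading off the matrix entries is the standard justification. Your closing remark about pinning down the row-versus-column convention is precisely the point the paper addresses in its ``Caution'' following the lemma (matrices act on row vectors, $v\mapsto vM$), under which the matrix of $f\mapsto f\pi_S$ is $H$ itself rather than $H^T$.
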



\noindent {\bf Caution.} In Lemma \ref{lem:mult_pi} above, we suppose that matrices act on row-vectors, i.e. an $n\times n$ binary matrix $M$ corresponds to an endomorphism of $\F_2^n$ by $v\mapsto vM$, where $v\in \F_2^n$ is represented by a row-vector.

\begin{lem}\label{lem:get_nilp}
  The adjacency matrix $H$ of $\G (G,S)$ is self--orthogonal if and only if $\pi_S \hat{\pi}_S=0$. In particular, if $S=S^{-1}$, then $H$ is self-orthogonal if and only if $\pi_S^2=0$.
\end{lem}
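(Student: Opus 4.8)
The plan is to prove Lemma~\ref{lem:get_nilp} directly from Lemma~\ref{lem:mult_pi} by translating the matrix identity $HH^T = 0$ into the language of the group algebra $\F_2[G]$. First I would recall that, by Lemma~\ref{lem:mult_pi}, the matrix $H$ represents the endomorphism $\phi_S \colon f \mapsto f\pi_S$ of $\F_2[G]$ (acting on row-vectors), and similarly $H^T$ represents $f \mapsto f\hat{\pi}_S$. Composition of linear maps corresponds to the product of the representing matrices, so the endomorphism $f \mapsto (f\pi_S)\hat{\pi}_S = f(\pi_S\hat{\pi}_S)$ is represented by the matrix $HH^T$ (being careful about the order: since our matrices act on the right of row vectors, $v \mapsto vH \mapsto (vH)H^T = v(HH^T)$, so the matrix $HH^T$ is indeed the right-multiplication-by-$H$ followed by right-multiplication-by-$H^T$, which is right multiplication by $\pi_S\hat\pi_S$ in $\F_2[G]$).

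The key step is then the observation that a linear endomorphism of a finite-dimensional vector space is the zero map if and only if its matrix in any basis is the zero matrix; applying this to the basis $\{e_g : g \in G\}$ of $\F_2[G]$, we get $HH^T = 0$ if and only if the map $f \mapsto f(\pi_S\hat{\pi}_S)$ is identically zero. Finally, right multiplication by a fixed element $x \in \F_2[G]$ is the zero map if and only if $x = 0$ — one direction is trivial, and for the other it suffices to evaluate the map on $f = e_1$ (the identity element of $G$), which gives $e_1 \cdot x = x$. Hence $HH^T = 0 \iff \pi_S\hat{\pi}_S = 0$. For the last assertion, when $S = S^{-1}$ Notation~\ref{nota:pi_s} gives $\hat{\pi}_S = \pi_S$, so the condition becomes $\pi_S^2 = 0$.

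I do not anticipate a genuine obstacle here; the only thing requiring care is bookkeeping about the side on which matrices and algebra elements act, so that the composition $HH^T$ really corresponds to $\pi_S\hat\pi_S$ rather than $\hat\pi_S\pi_S$ (these differ in a noncommutative group algebra, though they are transposes of each other and hence simultaneously zero, so the statement is robust either way). The proof is essentially a two-line unwinding of Lemma~\ref{lem:mult_pi} together with the elementary fact that right multiplication by $x$ in a group algebra is injective-or-zero depending on whether $x$ is zero, detected by testing on the basis element $e_1$.
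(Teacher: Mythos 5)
Your proof is correct and is precisely the argument the paper intends: the paper states Lemma~\ref{lem:get_nilp} without proof, treating it as the immediate consequence of Lemma~\ref{lem:mult_pi} that you spell out (with the right-action bookkeeping handled correctly, and the final reduction by evaluating right multiplication at $e_1$). Your parenthetical hedge about $\pi_S\hat\pi_S$ versus $\hat\pi_S\pi_S$ is unnecessary since your composition already produces the correct order, but it does no harm.
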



In particular, the problem of finding sparse self-orthogonal matrices is equivalent with that of finding a $2$-nilpotent element of $\F_2[G]$ having a low weight compared to $2^{\# G}$. 

\subsection{Some examples}

\begin{ex}
Let $G$ be the group $(\Z/2n\Z)^2$ and $S$ be the set $S:=\{(1,0),(0,1), (-1,0), (0,-1), (n+1,0), (n-1,0), (0, n+1), (0, n-1)\}$. Then, the adjacency matrix of $\G ((\Z/2n\Z)^2, S)$ is self-orthogonal. The corresponding group algebra is isomorphic to $\F_2[x,y]/(x^{2n}-1, y^{2n}-1)$ and the element $\pi_S$ equals $x+y+x^{n-1}+y^{n-1}+y^{n+1}+x^{n+1}+x^{2n-1}+y^{2n-1}$.
\end{ex}

Motivated by MacKay et al.'s draft \cite{MMS}, the group we will focus
on in the rest of the paper is $G=\F_2^n$. Since we are dealing with
an abelian group we denote group operations additively rather than multiplicatively.

\begin{ex}\label{ex:ex_cube}
  $G=\F_2^n$ and $S$ is any system of generators with an even number of elements.
The corresponding group algebra is isomorphic to $\F_2[x_1, \ldots , x_n]/(x_1^2-1, \ldots , x_n^2-1)$, in which one sees easily that any element of even weight satisfies $f^2=0$.
\end{ex}

\section{Basic properties of CSS codes from Cayley Graphs of $\F_2^n$}\label{sec:basic}

As we have just seen in the last example, any even number of generators of $\F_2^n$
defines a Cayley graph whose adjacency matrix is a
 self-orthogonal matrix, from which we have a quantum code. The row
 weight of the matrix is equal to the cardinality of the set of
 generators: when this cardinality is chosen proportional to $n$, we
 have a row weight that is logarithmic in the row length, hence the
 LDPC character of the quantum code. As put forward in \cite{MMS},
 note also that the matrix is $2^n\times 2^n$, i.e. has a highly
 redundant number of rows, which is beneficial for decoding. It also
 makes the computation of its rank, and hence the dimension of the
 quantum code, non-trivial. The present paper strives to compute or estimate
 parameters, dimension and minimum distance, of the resulting quantum
 LDPC code.

\subsection{Context and notation}

One of the main difficulties of the following work is that we juggle with different kinds of classical codes. Roughly speaking, we deal with {\it small} codes of length $n$ and {\it big} codes of length $2^n$.

This is the reason why we first need to describe carefully the landscape and the notation we choose.

\subsubsection{The ``small'' and ``big'' objects}\label{subsubsection:small_and_big}
For a positive integer $n$, the canonical basis of $\F_2^n$ is denoted by $S_n:=(e_1, \ldots , e_n)$.
In what follows, words of $\F_2^n$ are denoted by letters in lower case such as $c, m$ or $x$.
Such words are referred as {\it small words} and subspaces of $\F_2^n$ are referred as {\it small codes}.

Given a set $S$ of generators of $\F_2^n$ we denote by $\M (\F_2^n, S)$, or $\M (S)$ when no confusion is possible, an adjacency matrix of the Cayley graph $\G (\F_2^n , S)$.
From Proposition \ref{prop:How_to}, if $\# S$ is even, then $\M (S)$ is self--orthogonal.
We denote by $\C (\F_2^n , S)$ or $\C (S)$ the code with generator matrix $\M (S)$.
Words of this code or more generally of its ambient space, namely $\F_2^{2^n}$ will be denoted by letters in Gothic font such as $\c$ or $\d$.
In what follows and to help the reader, we frequently refer to {\it big words} and {\it big codes} when dealing with such words or codes. Gothic fonts are dedicated to {\it big} objects, such as the matrices $\M (S)$, the Cayley graphs $\G(S)$, the corresponding big codes $\C (S)$ and so on...

\subsubsection{Graphs}
In a graph $\G$, we say that two connected vertices have distance $r$
if the the shortest path between them consists of $r$ edges. This
defines a natural metric on $\G$. 

\begin{nota}
For this distance, a ball centred at
a vertex $x$ of radius $\rho$ is denoted by $\B(x, \rho)$, it is the
set of vertices at distance $\leq \rho$ of $x$. A sphere of centre $x$
and radius $\rho$ is denoted by $\Sp (x, \rho)$.  
\end{nota}

We will say that a graph $\H$ is a {\em cover} or a {\em lift} of $\G$
if it comes together with a surjective map $\gamma : \H \rightarrow \G$ called a
{\em covering map} such that for any vertex $h$ of $\H$, the map
$\gamma$, restricted to the set of neighbours of $h$, is a one-to-one
mapping onto the set of neighbours
of $\gamma(h)$. The covering map $\gamma$ is a local isomorphism.
It can be shown that when $\G$ is connected, the
cardinality of the preimage of any vertex is constant: we will refer
to this number as the {\em degree} of the cover.

Consider the particular case when $\G = \G (\F_2^m, T)$ for $T$ some
set of generators of $\F_2^m$. A natural covering map of $\G$ is 
\begin{equation}
  \label{eq:cover}
 \gamma: \H = \G (\F_2^{\# T}, S_{\# T}) \longrightarrow \G = \G (\F_2^m, T)
\end{equation}
which can be thought of as removing linear dependencies between
elements of $T$ (see \S \ref{section:hyper_cover}). Any Cayley graph associated to $\F_2^m$ is therefore locally
isomorphic to some
hypercube $\G (\F_2^n , S_n)$.
This covering construction was used by Tillich and Friedman in
\cite{FT02}. Starting with a code $C$ of generating matrix $M$, they
used the set $T$ of columns of $M$ to define a graph $\G = \G (\F_2^m, T)$:
relating the eigenvalues of $\G$ to those of its cover
\eqref{eq:cover} they derived upper bounds on the minimum distance of
$C$.
Here we shall rather view the set of generators $T$ as the set of
columns of a code $C$'s parity-check matrix (rather than a generating
matrix). The minimum distance $d$ of $C$ is therefore the minimum
weight of a
linear relation between generators of $T$, and for $\rho<d$ the balls
 $\B(x, \rho)$ in $\H$ and $\B(\gamma(x), \rho)$ in $\G$ are
isomorphic.

\subsubsection{The dictionary relating big codes and graphs}

We keep the notation of \S \ref{subsubsection:small_and_big}. 
It is worth noting that elements of the ambient space of $\C(S)$ are
in one--to--one correspondence with subsets of the vertex--set of $\G
(S)$. In what follows, we frequently allow ourselves to regard big
words as sets of vertices, while vertices are nothing but elements of
$\F_2^n$. In particular we allow ourselves notation such as ``$x\in \c$'', where $x\in \F_2^n$ and $\c\in \F_2^{2^n}$. 
From this point of view, we frequently use the elementary lemma below.
Recall that, given two subsets $A, B$ of a set $E$, the symmetric difference of $A$ and $B$ is defined by $A \bigtriangleup B:=(A\cup B) \setminus (A\cap B)$. This operation is associative. 

\begin{lem}\label{lem:dict}
Regarding elements of the ambient space of $\C(\F_2^n, S)$ as subsets of the vertex--set of $\G (\F_2^n, S)$,

\begin{enumerate}[(1)]
  \item a row of $\M(\F_2^n, S)$ is nothing but a sphere $\Sp (x, 1)$ of centre $x\in \F_2^n$ and radius $1$, where $x$ is the index of the row;
  \item a word of $\C (\F_2^n, S)$ is a symmetric difference of spheres of radius $1$, or equivalently an $\F_2$--formal sum of such spheres;
  \item\label{item:geo_dual} a word $\c \in \C (\F_2^n, S)^{\bot}$ is a set of vertices such that for every sphere $\Sp(x, 1)$ of radius $1$, the intersection $\c \cap \Sp (x, 1)$ has even cardinality. 
\end{enumerate}  
\end{lem}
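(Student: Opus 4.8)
The plan is to verify the three items in turn, relying only on the dictionary between $\F_2[\F_2^n] = \F_2^{2^n}$ and subsets of $\F_2^n$, namely a big word $\c$ corresponds to its support $\{x \in \F_2^n : \c_x = 1\}$, and addition of big words corresponds to symmetric difference of supports. Throughout I identify the vertex set of $\G(\F_2^n, S)$ with $\F_2^n$.

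For item~(1): by definition of the Cayley graph, the row of $\M(\F_2^n, S)$ indexed by $x$ has a $1$ in column $y$ exactly when $y = x + s$ for some $s \in S$. Since $S$ generates $\F_2^n$ and $0 \notin S$ can be assumed (an all-zero generator would create loops; in any case the graph distance is the word metric coming from $S$), the set $\{x + s : s \in S\}$ is precisely the set of neighbours of $x$, i.e.\ the sphere $\Sp(x,1)$. So the row indexed by $x$, viewed as a subset of $\F_2^n$, is $\Sp(x,1)$; this is just a restatement of Lemma~\ref{lem:mult_pi} with $\pi_S$ written out.

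For item~(2): a word of $\C(\F_2^n, S)$ is by definition an $\F_2$-linear combination of the rows of $\M(\F_2^n, S)$. Under the dictionary, $\F_2$-addition of big words is symmetric difference of their supports, and this operation is associative (as recalled just before the lemma), so a linear combination $\sum_{x \in A} (\text{row } x)$ corresponds to the iterated symmetric difference $\bigtriangleup_{x \in A} \Sp(x,1)$ for some $A \subseteq \F_2^n$. By item~(1) this is exactly a symmetric difference, equivalently an $\F_2$-formal sum, of radius-$1$ spheres. For item~(3): a big word $\c$ lies in $\C(\F_2^n, S)^\bot$ iff its standard inner product with every row of $\M(\F_2^n, S)$ vanishes in $\F_2$. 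The inner product of $\c$ (as a $0/1$ vector) with the row indexed by $x$ counts, modulo $2$, the number of coordinates where both are $1$, which by item~(1) is $\#\big(\c \cap \Sp(x,1)\big) \bmod 2$. Hence $\c \in \C(\F_2^n, S)^\bot$ iff $\#\big(\c \cap \Sp(x,1)\big)$ is even for all $x$, which is the claim.

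None of the three steps presents a genuine obstacle; the content is entirely bookkeeping, translating the already-established Lemma~\ref{lem:mult_pi} and the definition of the dual code through the support/symmetric-difference correspondence. The only point requiring a word of care is the implicit assumption that the neighbour set of $x$ in the Cayley graph is exactly $x + S$ with no collisions or self-loops, i.e.\ that $0 \notin S$ and the elements of $S$ are distinct, so that $\Sp(x,1)$ has cardinality $\#S$ and coincides with the support of row $x$; this is part of the standing conventions on the pair $(\F_2^n, S)$ and should simply be noted in passing.
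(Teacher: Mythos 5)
Your proof is correct and is exactly the routine definitional unwinding the paper has in mind; indeed the paper states Lemma~\ref{lem:dict} as an ``elementary lemma'' and omits the proof entirely. Your added remark about $0\notin S$ and distinctness of the generators is a reasonable point of care but does not change the substance.
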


\subsection{Automorphisms of the big codes and the graphs}

Given a positive integer $n$, recall that the Hamming--isometries $\phi: \F_2^n \longrightarrow \F_2^n$ are of the form $\phi=\sigma \circ t_m$, where $\sigma$ is a permutation of the coordinates and $t_m$ is the affine translation $x\longmapsto x+m$ for some fixed $m\in \F_2^n$.

\begin{lem}
  Let $S$ be a family of generators of $\F_2^n$ and $\phi$ be a Hamming--isometry of $\F_2^n$, then $\phi$ induces a permutation $\Phi$ of $\F_2^{2^n}$ which is an automorphism of $\G (S)$ and an element of the permutation group of $\C(S)$ (and hence in that of $\C(S)^{\bot}$).
\end{lem}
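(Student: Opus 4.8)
The plan is to show that a Hamming-isometry $\phi$ of $\F_2^n$ acts on the big space $\F_2^{2^n}$ by permuting vertices of the Cayley graph $\G(S)$, and that this permutation respects the edge relation. First I would define the induced permutation: since $\phi$ is a bijection of $\F_2^n$, which is exactly the vertex-set of $\G(S)$, it induces a bijection $\Phi$ of the set of subsets of $\F_2^n$, i.e.\ of $\F_2^{2^n}$, by $\Phi(\c) = \phi(\c) = \{\phi(x) : x \in \c\}$; this is clearly $\F_2$-linear since $\phi(\c \bigtriangleup \d) = \phi(\c)\bigtriangleup\phi(\d)$. So $\Phi$ is an element of the permutation group acting on $\F_2^{2^n}$, and it suffices to check (a) that $\Phi$ is a graph automorphism of $\G(S)$, and (b) that $\Phi$ preserves $\C(S)$; the statement about $\C(S)^\bot$ is then automatic, since any linear permutation fixing a code also fixes its dual.

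For the automorphism claim (a), I would treat the two types of isometry separately. If $\phi = t_m$ is a translation, then $x$ and $x'=x+s$ are adjacent (for $s\in S$) iff $\phi(x)=x+m$ and $\phi(x')=x+s+m$ differ by $s\in S$, so translations are automatisms of $\G(S)$ for any $S$ — this is just the standard left-regular action of $\F_2^n$ on its own Cayley graph. If $\phi=\sigma$ is a coordinate permutation, then $x$ and $x+s$ are adjacent iff $\sigma(x)$ and $\sigma(x+s)=\sigma(x)+\sigma(s)$ differ by $\sigma(s)$; one needs $\sigma(s)\in S$, which need not hold for an arbitrary permutation $\sigma$ of the coordinates. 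The crucial point I would stress — and the main subtlety of the lemma — is that the Hamming-isometry here is an isometry of $\F_2^n$ \emph{as a metric space equipped with the generating set $S$}, so by hypothesis $\phi$ must be such that $\phi(x),\phi(x')$ are adjacent in $\G(S)$ whenever $x,x'$ are; equivalently $\sigma$ must permute $S$. I would therefore clarify that the relevant isometries are exactly those Hamming-isometries $\sigma\circ t_m$ for which $\sigma$ stabilizes $S$ setwise (translations impose no extra constraint), and under that reading the verification is the routine computation just sketched. In short, the edge-preservation reduces to $\sigma(S)=S$, and the bijectivity of $\Phi$ on neighbourhoods of a vertex is immediate since $\sigma$ permutes $S$.

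For the code-preservation claim (b), I would use the dictionary of Lemma~\ref{lem:dict}: the rows of $\M(\F_2^n,S)$ are precisely the spheres $\Sp(x,1)$, and $\C(S)$ is the $\F_2$-span of these spheres. A graph automorphism $\Phi$ sends $\Sp(x,1)$ to $\Sp(\Phi(x),1)$ (this is exactly the statement that $\Phi$ preserves adjacency), hence permutes the generating set of $\C(S)$, hence maps $\C(S)$ onto itself; being an $\F_2$-linear permutation of the ambient space, $\Phi$ then also preserves $\C(S)^\bot$ by the characterization in part~\eqref{item:geo_dual} of Lemma~\ref{lem:dict} (a set meeting every sphere in even cardinality is mapped by $\Phi$ to another such set, since $\Phi$ permutes the spheres). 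This closes the proof.

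I expect the only real obstacle to be the conceptual one flagged above: being careful about what ``Hamming-isometry'' means in the presence of the generating set $S$, so that the coordinate-permutation part $\sigma$ is indeed forced to stabilize $S$; once that is pinned down, everything else is a short unwinding of definitions and the dictionary lemma. No delicate estimates or computations are involved.
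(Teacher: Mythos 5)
Your argument follows essentially the same route as the paper's, which is a two-line verification: $\C(S)$ is generated by the spheres $\Sp(x,1)$, the induced map $\Phi$ satisfies $\Phi(\Sp(x,1))=\Sp(\phi(x),1)$, hence $\Phi$ permutes the generators of $\C(S)$ and therefore preserves $\C(S)$ and $\C(S)^\bot$. The subtlety you flag is, however, genuine and is glossed over in the paper: the identity $\Phi(\Sp(x,1))=\Sp(\phi(x),1)$ holds for any translation $t_m$ and any $S$, but for the coordinate-permutation part $\sigma$ it requires $\sigma(S)=S$, and with the paper's definition of a Hamming-isometry as an arbitrary $\sigma\circ t_m$ the lemma as literally stated fails. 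For instance, with $n=2$ and $S=\{e_1,\,e_1+e_2\}$, the transposition $\sigma$ of the two coordinates sends the edge $\{0,e_1\}$ of $\G(S)$ to $\{0,e_2\}$, which is not an edge since $e_2\notin S$. Your repair --- restricting to isometries whose linear part stabilizes $S$ setwise, translations being unconstrained --- is the right one, and it covers every use the paper makes of the lemma: Corollary~\ref{cor:vertex_trans} and Lemma~\ref{lem:conjugation} invoke only translations, and the coordinate permutations used in the proof of Proposition~\ref{prop:local_sum} act on $\G(S_n)$, whose generating set $S_n$ is stable under all of them. So your proof is correct, and slightly more careful than the published one; the remainder (linearity of $\Phi$ with respect to symmetric difference, preservation of the dual via Lemma~\ref{lem:dict}) matches the paper's reasoning.
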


\begin{proof}
For all small word $x\in \F_2^n$, the sphere $\Sp (x, 1)$ is the big word whose nonzero entries are the small words $x+s$ with $s\in S$.
  The code $\C(S)$ is generated by the $\Sp (x, 1)$'s for $x\in\F_2^n$ and one sees easily that $\Phi (\Sp (x, 1))=\Sp({\phi (x)}, 1)$.
\end{proof}

\begin{cor}\label{cor:vertex_trans}
Let $m\in \F_2^m$, if there exists a nonzero big word $\c$ in $\C(S)$ (resp. $\C(S)^{\bot}$), then, there exists a big word $\c' \in \C(S)$ (resp. $\C(S)^{\bot}$) with the same weight and which contains the small word $m$.
\end{cor}

\section{The Hamming hypercube}\label{sec:Hamm}

In this section, $n$ denotes an even integer and we study the properties the Cayley graph $\G (\F_2^n, S_n)$.
Recall that $S_n$ denotes the canonical basis of $\F_2^n$.

First, we show that $\Rk (\M( S_n))=2^{n-1}$, which means that the corresponding big code is self--dual and hence that the corresponding CSS code is trivial.
However, the properties of $\G(\F_2^n, S_n)$ are of interest because
of its role in
the covering construction \eqref{eq:cover}.


\subsection{The corresponding Quantum code is trivial}

\begin{prop}\label{prop:Rk_An}
Let $n$ be an even integer. The adjacency matrix $\M( S_n)$ of $\G (\F_2^n, S_n)$, satisfies
$$
\Rk (\M(S_n))=2^{n-1}.
$$
Therefore, $\M (S_n)$, or equivalently $\C (S_n)$, is self-dual.
\end{prop}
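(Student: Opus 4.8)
The plan is to work within the group algebra picture from Lemma~\ref{lem:get_nilp} and Example~\ref{ex:ex_cube}. Identifying $\F_2[\F_2^n]$ with $R:=\F_2[x_1,\dots,x_n]/(x_1^2-1,\dots,x_n^2-1)$, the matrix $\M(S_n)$ represents multiplication by $\pi_{S_n}=x_1+\cdots+x_n$. So computing $\Rk(\M(S_n))$ amounts to computing $\dim_{\F_2}(\pi_{S_n}R)$, equivalently $2^n - \dim_{\F_2}\Ker$, where $\Ker=\{f\in R : \pi_{S_n}f=0\}$. The key algebraic simplification is the change of variables $y_i:=x_i-1$ (equivalently $x_i = 1+y_i$), which is a ring automorphism of $R$ since it is its own inverse over $\F_2$; under it $R\cong \F_2[y_1,\dots,y_n]/(y_1^2,\dots,y_n^2)$, the truncated polynomial ring, and $\pi_{S_n}=\sum x_i = \sum(1+y_i) = n + \sum y_i$. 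Since $n$ is even, $\pi_{S_n}$ becomes simply $\sigma:=y_1+\cdots+y_n$ in the new coordinates. So the whole problem reduces to: in $A:=\F_2[y_1,\dots,y_n]/(y_i^2)$, compute the rank of multiplication by $\sigma=\sum y_i$.

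First I would set up this reduction cleanly, then analyze multiplication by $\sigma$ on $A$. The ring $A$ has $\F_2$-basis the squarefree monomials $y_I:=\prod_{i\in I}y_i$ for $I\subseteq\{1,\dots,n\}$, and $\sigma\cdot y_I = \sum_{i\notin I} y_{I\cup\{i\}}$. This is exactly (the $\F_2$-version of) the combinatorial "up" map on the Boolean lattice, graded by $|I|$. So $A=\bigoplus_{k=0}^n A_k$ with $\dim A_k=\binom nk$, and $\sigma$ maps $A_k\to A_{k+1}$; the rank of $\sigma$ is $\sum_k \Rk(\sigma|_{A_k\to A_{k+1}})$. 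The natural approach to pin down $\Ker\sigma$ is to note that $\sigma^2 = \sum_i y_i^2 + 2\sum_{i<j}y_iy_j = 0$ in $A$ (char $2$ and $y_i^2=0$), so $\im\sigma\subseteq\Ker\sigma$, i.e. $\sigma$ makes $A$ into a complex. The claim $\Rk(\M(S_n))=2^{n-1}$ is equivalent to $\dim\Ker\sigma = 2^{n-1}$, which — combined with $\im\sigma\subseteq\Ker\sigma$ and $\dim\im\sigma=2^n-\dim\Ker\sigma$ — is exactly the statement that this complex $(A,\sigma)$ is \emph{exact}, i.e. $\im\sigma=\Ker\sigma$, and then $2\dim\Ker\sigma = \dim\Ker\sigma+\dim\im\sigma+\dim\im\sigma - \dim\im\sigma$... more directly: exactness plus $\dim\Ker\sigma+\dim\im\sigma = 2^n$ forces $\dim\Ker\sigma=2^{n-1}=\Rk(\M(S_n))$. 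Wait — I should double check: $\Rk(\M(S_n))=\dim\im\sigma$ and exactness gives $\dim\im\sigma=\dim\Ker\sigma$, and rank-nullity gives $\dim\im\sigma+\dim\Ker\sigma=2^n$, so indeed $\dim\im\sigma=2^{n-1}$.

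So the heart of the matter is proving exactness of the complex $(A,\sigma)$, i.e. $\Ker\sigma\subseteq\im\sigma$. This is where I expect the main work. A clean way: exhibit a contracting homotopy, or a "second operator" $\tau$ with $\sigma\tau+\tau\sigma = \mathrm{id}$ on a suitable part. The obvious candidate over a field of characteristic $0$ would be the down map $\partial y_I=\sum_{i\in I}y_{I\setminus\{i\}}$, for which $\sigma\partial+\partial\sigma$ acts as multiplication by $n$ in middle degrees — but in characteristic $2$ this degenerates and is genuinely delicate; the parity of $n$ and of the degrees will matter. An alternative, probably cleaner, route is induction on $n$: write $A_n = A_{n-1}\otimes \F_2[y_n]/(y_n^2) = A_{n-1}\oplus y_n A_{n-1}$, and $\sigma_n = \sigma_{n-1} + y_n$ where $y_n$ denotes the degree-raising multiplication; then $\Ker\sigma_n$ and $\im\sigma_n$ decompose compatibly and a Mayer–Vietoris / mapping-cone style argument relates exactness for $n$ to exactness for $n-1$ (or $n-2$). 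One must be slightly careful because exactness of $(A_n,\sigma_n)$ is what we want for all even $n$, while the induction may naturally proceed through odd $n$ where the complex is \emph{not} exact (this is exactly the phenomenon behind the nontrivial CSS code when $n$ is odd); tracking the one-dimensional discrepancy in the odd case is the technical crux. The fallback is a direct eigenvalue/weight-enumerator computation: $\M(S_n)$ is the adjacency matrix of the $n$-cube, whose eigenvalues are $n-2w$ for $w=0,\dots,n$ with multiplicity $\binom nw$ — but these are integer eigenvalues and we need the $\F_2$-rank, so one would instead diagonalize over $\F_2$ via the characters of $\F_2^n$ (Fourier/Walsh transform), under which $\M(S_n)$ becomes diagonal with entries $\widehat{\pi_{S_n}}(\chi) = \sum_i (-1)^{\chi_i} = n-2|\chi| \equiv n \equiv 0 \pmod 2$ — but that shows everything is $0$ mod $2$, which is consistent with the matrix being self-orthogonal but does \emph{not} by itself give the rank, since the Fourier transform is not defined over $\F_2$. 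So I would commit to the group-algebra-plus-exactness approach, with the inductive proof of exactness as the main lemma, and flag the even-vs-odd bookkeeping as the one step requiring genuine care.
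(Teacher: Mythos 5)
Your reduction is correct and genuinely different from the paper's route: you pass to the truncated polynomial ring via $y_i=x_i+1$ and turn the rank computation into the exactness of the complex $(A,\sigma)$ with $\sigma=y_1+\cdots+y_n$. The paper never changes variables: it computes $\coker \phi_{S_n}=\F_2[X_1,\ldots,X_n]/(X_1^2-1,\ldots,X_n^2-1,\,X_1+\cdots+X_n)$ directly, eliminates $X_n$ using the new relation, observes that for $n$ even the induced relation $(X_1+\cdots+X_{n-1})^2-1$ is already a consequence of the relations $X_i^2-1$, and concludes that the cokernel is $\F_2[\F_2^{n-1}]$, of dimension $2^{n-1}$ --- a three-line argument with no complex, no homotopy and no induction.

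The gap is that you never prove the exactness $\Ker\sigma\subseteq\im\sigma$; after your (correct) reduction this \emph{is} the entire content of the proposition, and you only enumerate candidate strategies for it. Moreover, the obstruction you flag as the ``technical crux'' --- that $(A_m,\sigma_m)$ might fail to be exact for odd $m$, forcing you to track a one-dimensional discrepancy --- is a phantom: the complex is exact for \emph{every} $m\geq 1$. (For $m=1$, $\Ker(y)=\im(y)=(y)$ in $\F_2[y]/(y^2)$; in general $A_m=(\F_2[y]/(y^2))^{\otimes m}$ with $\sigma_m$ as the total differential of the tensor-product complex, and the K\"unneth formula over a field gives $H(A_m)=H(A_1)^{\otimes m}=0$.) The parity of $n$ enters only in identifying $\pi_{S_n}$ with $\sigma$: for $n$ odd one gets $\pi_{S_n}=1+\sigma$, which is invertible, so $\M(S_n)$ has full rank; the nontrivial odd-$n$ quantum code studied later in the paper uses the enlarged generating set $S_n\cup\{(1\cdots1)\}$, not $S_n$. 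So your approach does close --- via K\"unneth or the tensor-factor induction you sketch --- but as written the key lemma is missing, your fallback homotopy genuinely degenerates for $n$ even (as you suspected, $\sigma\partial+\partial\sigma=n\cdot\mathrm{id}=0$), and the difficulty you anticipate in the induction does not exist.
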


\begin{proof}
The group algebra of $\F_2^n$ is $\F_2[\F_2^n]\simeq \F_2 [X_1, \ldots , X_n]/(X_1^2 -1, \ldots , X_n^2 -1)$. Using {\it Notation}~\ref{nota:pi_s}, the element $\pi_{S_n}$ is $X_1+\cdots +X_n$.
Thus, the cokernel of the endomorphism $\phi_{S_n} : x \longrightarrow x\pi_{S_n}$ is  $$\coker \phi_{S_n} = \F_2 [X_1, \ldots , X_n]/(X_1^2 -1, \ldots , X_n^2 -1, X_1+\cdots +X_n).$$ This last algebra is isomorphic to $\F_2 [X_1, \ldots , X_{n-1}]/(X_1^2 -1, \ldots , X_{n-1}^2 -1, (X_1+\cdots +X_{n-1})^2-1)$ and one sees easily that if $n$ is even, then $(X_1+\cdots +X_{n-1})^2-1= X_1^2 -1+ \cdots + X_{n-1}^2 -1$. Thus, this cokernel is isomorphic to 
$\F_2[X_1, \ldots , X_{n-1}]/(X_1^2 -1, \ldots , X_{n-1}^2 -1) \cong \F_2[\F_2^{n-1}]$ whose $\F_2$--dimension is exactly the half of that of $\F_2[\F_2^n]$.  
\end{proof}

\subsection{The graph is bipartite}

Another very useful and nice property of this family of graphs is given by the following statement.

\begin{prop}
  \label{prop:bipartite}
Consider the partition of $\F_2^n$ by $G_{even} \cup G_{odd}$ of small words of having respectively even and odd Hamming weight. Then, $\G (\F_2^n, S_n)$ is bipartite, i.e. any edge links an element of $G_{even}$ with one of $G_{odd}$.
\end{prop}

\begin{proof}
For all $x\in \F_2^n$ and all $e_i\in S_n$, the small words $x$ and
$x+e_i$ have weights of distinct parities.  
\end{proof}

\begin{rem}
  In matrix terms, this means, that, for a suitable ordering of the elements of $\F_2^n$, there exists a $2^{n-1}\times 2^{n-1}$ binary matrix $U_n$ such that
\begin{equation}\label{eq:Mat}
\M(S_n)=\left(
  \begin{array}{ccc}
    (0) & & U_n \\
        & &     \\
    U_n^T & & (0)
  \end{array}
\right).
\end{equation}
In addition, one shows easily by induction on $n$ that $U_n^T=U_n$. 
\end{rem}

The former result has interesting consequences on the code $\C(S_n)$ for $n$ even.

\begin{cor}
 Let $n$ be an even integer. The code $\C (\F_2^n, S_n)$ splits in a direct sum of two isomorphic subcodes with disjoint supports
$$
\C (S_n) = C (S_n)_{even} \oplus C (S_n)_{odd}
$$
corresponding to big words whose supports are the small words of even and odd weight respectively. Both subcodes are self--dual.
\end{cor}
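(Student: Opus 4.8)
The plan is to leverage the block structure \eqref{eq:Mat} directly. Recall from Proposition~\ref{prop:bipartite} and the subsequent remark that, after reordering $\F_2^n$ so that even-weight words come first, the matrix $\M(S_n)$ is antidiagonal with blocks $U_n$ and $U_n^T = U_n$. A big word $\c \in \C(S_n)$ is a row-span combination $\c = v\,\M(S_n)$ for some $v \in \F_2^{2^n}$, and if we split $v = (v_e \mid v_o)$ according to the even/odd partition, then $\c = v\,\M(S_n) = (v_o U_n^T \mid v_e U_n)$. The first block of $\c$ (its restriction to even-weight vertices) depends only on $v_o$, and the second block depends only on $v_e$. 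This is exactly the statement that $\C(S_n) = C(S_n)_{even} \oplus C(S_n)_{odd}$ as a direct sum with disjoint supports: $C(S_n)_{even}$ is the set of big words supported on even-weight vertices, namely $\{(w \mid 0) : w \in \operatorname{Im} U_n^T\}$, and symmetrically for the odd part.

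Next I would check the three claimed properties one at a time. First, disjointness of supports is immediate from the block description above. Second, the two subcodes are isomorphic: since $U_n = U_n^T$, the map $(w \mid 0) \mapsto (0 \mid w)$ on row spaces is a linear isomorphism $C(S_n)_{even} \to C(S_n)_{odd}$ which moreover preserves Hamming weight; alternatively, invoke the Hamming isometry $x \mapsto x + e_1$ of $\F_2^n$, which swaps $G_{even}$ and $G_{odd}$, and apply the induced code automorphism $\Phi$ from the lemma preceding Corollary~\ref{cor:vertex_trans} to get a weight-preserving isomorphism exchanging the two subcodes. Third, each subcode is self-dual: by Proposition~\ref{prop:Rk_An}, $\C(S_n)$ is self-dual of dimension $2^{n-1}$, and since the direct sum has disjoint supports of size $2^{n-1}$ each, the dimension splits as $\dim C(S_n)_{even} + \dim C(S_n)_{odd} = 2^{n-1}$; by the isomorphism the two dimensions are equal, hence each equals $2^{n-2}$, which is half of $2^{n-1}$. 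Self-orthogonality of each subcode is inherited from that of $\C(S_n)$ (a subcode of a self-orthogonal code is self-orthogonal), and the dimension count then forces self-duality within its $2^{n-1}$-dimensional ambient coordinate block.

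I do not anticipate a serious obstacle here; the result is essentially a bookkeeping consequence of the antidiagonal block form \eqref{eq:Mat} together with $\Rk \M(S_n) = 2^{n-1}$. The one point requiring a little care is the self-duality of the two pieces: one must be careful to interpret $C(S_n)_{even}$ as a code inside the $2^{n-1}$-dimensional space indexed by even-weight vertices (so that ``self-dual'' makes sense there), and then confirm that orthogonality of $C(S_n)_{even}$ against itself is equivalent to the corresponding block identity $U_n^T U_n = 0$, which follows from $\M(S_n)\M(S_n)^T = 0$ by reading off the top-left block of that product. Given $U_n = U_n^T$, this block identity is $U_n^2 = 0$, and combined with $\Rk U_n = 2^{n-2}$ (which follows from $\Rk \M(S_n) = 2 \Rk U_n = 2^{n-1}$), we get precisely that $C(S_n)_{even}$ is self-dual, and likewise for the odd part.
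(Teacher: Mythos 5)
Your proof is correct and follows essentially the same route as the paper's: read off the antidiagonal block form \eqref{eq:Mat}, note both pieces are generated by $U_n$ (using $U_n^T=U_n$), deduce $U_nU_n^T=0$ from $\M(S_n)\M(S_n)^T=0$, and get self-duality from $\Rk(U_n)=\tfrac12\Rk\M(S_n)=2^{n-2}$. Your treatment is somewhat more explicit than the paper's (which disposes of all this in four lines), but there is no substantive difference.
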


\begin{proof}\label{cor:split_code}
  The two codes come respectively from the upper and lower halves of the row-set of $\M (S_n)$ in (\ref{eq:Mat}).
They are obviously isomorphic since they have the same generator matrix $U_n$.
The self--orthogonality is clear since $\M(S_n)\M(S_n)^T=0$ entails $U_nU_n^{T}=0$. In addition, it is clear that $\Rk (U_n)=\frac{1}{2}\Rk \M (S_n)=2^{n-2}$, which yields self--duality.
\end{proof}

\begin{prop}
  Using the notation of Proposition \ref{prop:bipartite} and Corollary \ref{cor:split_code}, a big word $\c \in \F_2^{2^{n}}$ whose support is contained in $G_{even}$ (resp. $G_{odd}$) is in $\C (S_n)_{even}$ (resp. $\C (S_n)_{odd}$) if and only if it is orthogonal to any sphere $\Sp (x,1)$ where $x$ is a small word of odd (resp. even) weight.
\end{prop}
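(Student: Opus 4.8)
The plan is to show that the code $\C(S_n)_{even}$ is characterized inside the space of big words supported on $G_{even}$ as the orthogonal complement (within that space) of the family of spheres $\Sp(x,1)$ centred at odd-weight small words $x$; and symmetrically for $\C(S_n)_{odd}$. The starting point is the matrix shape \eqref{eq:Mat}: with the ordering that puts even-weight words first, a big word supported on $G_{even}$ is a vector $\c = (\c_{even}, 0)$, and $\C(S_n)_{even}$ has generator matrix $U_n = U_n^T$ indexed by $G_{even}\times G_{odd}$ (rows indexed by the odd-weight centres, since $\Sp(x,1)$ for $x$ odd is supported on $G_{even}$). Concretely, I would first record that the generating spheres $\Sp(x,1)$ for $x$ of even weight are supported on $G_{odd}$ and generate $\C(S_n)_{odd}$, while those for $x$ of odd weight are supported on $G_{even}$ and generate $\C(S_n)_{even}$ — this is immediate from Proposition~\ref{prop:bipartite} and the proof of Corollary~\ref{cor:split_code}.

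Next I would translate the orthogonality condition. A big word $\c$ supported on $G_{even}$ is orthogonal to every $\Sp(x,1)$ with $x$ odd precisely when $U_n^T \c_{even}^T = 0$ (reading off the relevant block of $\M(S_n)$), i.e. $\c_{even} \in \Ker(U_n)$ viewed as a map on the $G_{even}$-indexed space, equivalently $\c \in \C(S_n)_{even}^{\bot}$ intersected with the $G_{even}$-supported subspace. Now invoke the self-duality from Corollary~\ref{cor:split_code}: $\C(S_n)_{even} = \C(S_n)_{even}^{\bot}$ as codes in the $G_{even}$-coordinates (both have dimension $2^{n-2}$ and the code is self-orthogonal). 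Hence a $G_{even}$-supported big word lies in $\C(S_n)_{even}$ if and only if it is orthogonal to all the generators of $\C(S_n)_{even}$, and by part (3) of Lemma~\ref{lem:dict} — or directly — being orthogonal to all $\Sp(x,1)$ with $x$ even is automatic here (those spheres live on $G_{odd}$, disjoint from the support of $\c$), so the only constraints that matter are orthogonality to the spheres centred at odd-weight $x$. The odd case is verbatim the same with the roles of even and odd exchanged, using $\C(S_n)_{odd}$ self-dual.

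I would present this as: ($\Rightarrow$) if $\c \in \C(S_n)_{even}$ then self-orthogonality of $\C(S_n)_{even}$ gives $\c \perp \Sp(x,1)$ for all odd $x$; ($\Leftarrow$) if $\c$ is $G_{even}$-supported and $\c\perp \Sp(x,1)$ for all odd $x$, then since $\c$ is trivially orthogonal to the $G_{odd}$-supported spheres too, $\c \in \C(S_n)^{\bot}$, and because its support lies in $G_{even}$ and $\C(S_n) = \C(S_n)_{even}\oplus\C(S_n)_{odd}$ with $\C(S_n)^{\bot} = \C(S_n)_{even}^{\bot}\oplus\C(S_n)_{odd}^{\bot}$ respecting the same support splitting, we get $\c\in \C(S_n)_{even}^{\bot} = \C(S_n)_{even}$.

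The only genuinely delicate point is bookkeeping the index sets correctly: one must be sure that $\Sp(x,1)$ for $x$ of odd weight is the object whose orthogonality cuts out $\C(S_n)_{even}$ and not $\C(S_n)_{odd}$, which follows because $x$ odd forces all neighbours $x+e_i$ to be even-weight, so the sphere is supported on $G_{even}$, matching the support of the words of $\C(S_n)_{even}$. Everything else is a direct consequence of the self-duality of the two half-codes already established in Corollary~\ref{cor:split_code} together with the block decomposition \eqref{eq:Mat}, so there is no real obstacle beyond being careful with which half is which.
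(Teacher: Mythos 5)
Your proof is correct and follows essentially the same route as the paper's: use the self-duality of $\C(S_n)$, observe that a sphere $\Sp(x,1)$ with $x$ of even weight is supported on $G_{odd}$ and hence automatically orthogonal to any $G_{even}$-supported word, and then use the support-disjoint splitting $\C(S_n)=\C(S_n)_{even}\oplus\C(S_n)_{odd}$ to land in the correct summand. The additional block-matrix bookkeeping with $U_n$ is harmless but adds nothing beyond what Corollary~\ref{cor:split_code} already provides.
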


\begin{proof}
 Since $\C(S_n)$ is self-dual, a big word is in $\C(S_n)$ if and only if it is orthogonal to any sphere of radius $1$. If $x$ is a small word of even weight, then the elements of $\Sp (x, 1)$ have odd weight and hence is obviously orthogonal to any big word supported in $G_{even}$. Thus, a big word with support in $G_{even}$ (resp. in $G_{odd}$) is in $\C (S_n)$ if and only if it is orthogonal to any sphere of radius $1$ centred at a small word of even (resp. odd) weight.
\end{proof}

Consequently, the graph $\G (S_n)$ can be regarded as a Tanner graph for $\C (S_n)_{even}$ where $G_{even}$ is the set of bit nodes and $G_{odd}$ the set of check nodes. It can conversely be regarded as a Tanner graph for $\C (S_n)_{odd}$ by switching bit and check nodes. 

\begin{figure}[!h]
  \centering
  \includegraphics[scale=0.5, angle=90]{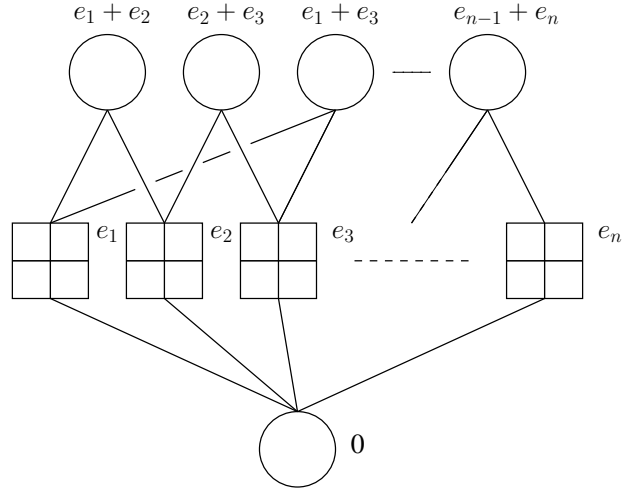}
  \caption{A part of the Hamming cube regarded as a Tanner Graph}
  \label{fig:Tanner}
\end{figure}

\begin{rem}\label{rem:bipart}
  Actually, this property of being bipartite is satisfied by any Cayley graph $\G (\F_2^n, S)$ as soon as for all $x\in \F_2^n$ and all $s\in S$, the weights of the small words $x$ and $x+s$ have distinct parities. It holds for instance for $\G (\F_2^m, S_m\cup \{e_1+\cdots +e_m\})$, where $m$ is odd.
\end{rem}

\subsection{A property of bounded codewords}

The following statement is crucial in the study of the minimum distance of Quantum codes from graphs covered by $\G (S_n)$.

\begin{prop}\label{prop:local_sum}
  Let $\c$ be a codeword in the row-space of $\M (S_n)$. Regarding $\c$ as a subset of the vertex--set of $\G (S_n)$, assume that $\c$ is contained in the ball $\B (x, r)$ for some vertex $x\in \F_2^n$ and some integer $r<n$. Then $\c$ is a sum of rows of $\M (S_n)$ with support contained in  $\B (x, r)$. Equivalently, $\c$ is the $\F_2$--formal sum of spheres of radius $1$ contained in $\B (x, r)$.
\end{prop}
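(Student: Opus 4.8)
The plan is to translate the statement into the language of the group algebra $\F_2[\F_2^n]$ and use the structure of $\pi_{S_n} = X_1 + \cdots + X_n$. Recall (Lemma~\ref{lem:mult_pi}) that a row $\Sp(x,1)$ of $\M(S_n)$ is $e_x \pi_{S_n}$, so a codeword $\c$ in the row-space is $\c = f\pi_{S_n}$ for some $f \in \F_2[\F_2^n]$, i.e.\ $\c$ is the formal sum of the spheres $\Sp(x,1)$ over the $x$ in the support of $f$. The claim to prove is: if $\c = f\pi_{S_n}$ is supported inside $\B(x,r)$ with $r < n$, then we may choose the representative $f$ to be supported inside $\B(x,r)$ as well (so that every sphere $\Sp(y,1)$ occurring has $y \in \B(x,r)$, hence is itself contained in $\B(x,r)$ — here we use $r < n$, since a vertex at distance exactly $r$ from $x$ has all its neighbours at distance $\leq r+1 \leq n$, but more precisely any $y$ with $d(x,y) \le r$ gives $\Sp(y,1) \subseteq \B(x, r+1)$; so in fact the correct reading is that $\c$ is a sum of spheres centred in $\B(x, r-1) \cup \{$boundary$\}$, and we want them inside $\B(x,r)$ — I will need to be careful about which radius the centres land in, see below).

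By the vertex-transitivity furnished by translations (Corollary~\ref{cor:vertex_trans}) I would first reduce to the case $x = 0$, so $\c \subseteq \B(0,r)$, the Hamming ball of radius $r$ around the origin. Now the key point is that the kernel of $\phi_{S_n}$, i.e.\ the set of $f$ with $f\pi_{S_n} = 0$, is exactly the ideal generated by $\pi_{S_n}$ itself: indeed $\pi_{S_n}^2 = X_1^2 + \cdots + X_n^2 = n\cdot 1 = 0$ since $n$ is even (this is Example~\ref{ex:ex_cube} / the computation in Proposition~\ref{prop:Rk_An}), and a dimension count shows $\Ker \phi_{S_n}$ has dimension $2^{n-1} = \dim(\pi_{S_n}\F_2[\F_2^n])$, so $\Ker \phi_{S_n} = \pi_{S_n}\F_2[\F_2^n] = \im \phi_{S_n}$. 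Hence two representatives $f, f'$ of the same codeword $\c$ differ by an element of $\pi_{S_n}\F_2[\F_2^n]$, that is $f' = f + g\pi_{S_n}$; equivalently, adding to $f$ any ``elementary relation'' $e_y(X_i + X_j) = e_y X_i + e_y X_j$ (i.e.\ toggling membership of the two vertices $y+e_i$ and $y+e_j$ in the support of $f$) does not change $\c$. So the real task is combinatorial: starting from \emph{some} representative $f$ and knowing $f\pi_{S_n}$ lives in $\B(0,r)$, use these toggle moves to push the support of $f$ into $\B(0,r)$.

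For the combinatorial core I would argue by ``peeling from the outside''. Let $\ell > r$ be the largest Hamming weight of a vertex in the support of $f$, and look at the top level: write $f = f_{>} + f_{\leq}$ where $f_{>}$ is supported on weight $> r$. A vertex $y$ of weight exactly $\ell$ contributes to $\c = f\pi_{S_n}$ the set $\{y + e_i : i\}$; among these, the $n - \ell$ vertices $y + e_i$ with $(y)_i = 0$ have weight $\ell + 1 > r$, so they must be cancelled by other contributions, necessarily from vertices of weight $\ell$ or $\ell+1$ in $\mathrm{supp}(f)$ — but $\ell$ is maximal, so the cancellation among weight-$(\ell+1)$ vertices must come from pairs within weight $\ell$. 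Analysing which weight-$\ell$ vertices can pairwise cancel their upward neighbours (two weight-$\ell$ vertices $y, y'$ share an upward neighbour iff they differ in exactly two coordinates, one up one down, i.e.\ $y' = y + e_i + e_j$ with $(y)_i = 1, (y)_j = 0$) and using $r < n$ (which guarantees $\ell \ge r+1 \ge $ enough room, and that the ``up'' directions are nonempty when $\ell < n$; the case $\ell = n$ is a boundary case to handle separately since then $y = (1,\dots,1)$ has no up-neighbours and its contribution is $\Sp(y,1) \subseteq \B(0, n-1)$ automatically), I expect to show that the restriction of $f$ to the top level $\ell$ is itself a sum of elementary relations $e_y(X_i + X_j)$ — roughly, $f_{>}$ restricted to weight $\ell$ must be a ``cycle'' in an appropriate sense — which we can therefore subtract off from $f$ without changing $\c$, strictly lowering $\ell$. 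Iterating drives the support of $f$ down to weight $\le r$, finishing the proof. The main obstacle is exactly this last step: making precise and correct the claim that the top-level part of any representative is a combination of elementary relations, i.e.\ controlling $\Ker \phi_{S_n}$ graded by Hamming weight; this is essentially a statement that the ``top layer'' of the Cayley/hypercube complex has the expected (co)homology, and the bipartite, graded structure of $\G(S_n)$ together with $n$ being even is what should make it go through.
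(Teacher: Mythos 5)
There is a genuine gap, and it sits exactly at the move your whole peeling argument relies on. You claim that adding the ``elementary relation'' $e_{y+e_i}+e_{y+e_j}$ to the representative $f$ does not change $\c=f\pi_{S_n}$. It does: $(e_{y+e_i}+e_{y+e_j})\pi_{S_n}=\Sp(y+e_i,1)+\Sp(y+e_j,1)$, the symmetric difference of two distinct spheres of radius $1$, which has $2(n-2)\neq 0$ elements. Your identification $\Ker\phi_{S_n}=\im\phi_{S_n}=\pi_{S_n}\F_2[\F_2^n]$ is correct, but it says that the elements you may add to $f$ without changing $\c$ are sums of \emph{whole spheres} $e_z\pi_{S_n}=\Sp(z,1)$, i.e.\ you must toggle all $n$ neighbours of some vertex $z$ simultaneously; $X_i+X_j$ does not annihilate $\pi_{S_n}$. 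With the correct moves the peeling is no longer routine: adding $\Sp(z,1)$ with $wt(z)=\ell-1$ kills one weight-$\ell$ vertex of $\mathrm{supp}(f)$ but also toggles $n-\ell$ further weight-$\ell$ vertices and $\ell-1$ vertices of weight $\ell-2$, so the top level of $f$ does not obviously decrease. Since the step you yourself single out as ``the main obstacle'' --- that the top layer of $f$ is a combination of these relations --- is both unproved and based on the wrong set of relations, the argument does not go through as written. (There is also the unresolved radius bookkeeping you flag: for the spheres to be \emph{contained} in $\B(x,r)$ their centres must land in $\B(x,r-1)$, not merely in $\B(x,r)$.)

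The paper avoids representatives altogether and peels the codeword $\c$ itself, exploiting self-duality: since $\C(S_n)=\C(S_n)^{\bot}$, the word $\c$ meets every sphere $\Sp(z,1)$ in an even number of vertices. This parity condition forces the outermost layer $\c\cap\Sp(0,\rho)$ to contain, for every coordinate $i$, a vertex with $i$-th entry equal to $1$; one then subtracts the spheres $\Sp((\ell|0),1)$ centred at the weight-$(\rho-1)$ points obtained by clearing the last coordinate of the outermost vertices $(\ell|1)$. These spheres are contained in $\B(0,\rho)$, the resulting word is pushed into $\B(0,\rho-1)$, and induction on the radius finishes the proof. If you want to keep your group-algebra framing, you would have to replace the pairwise toggles by exactly this sphere-subtraction mechanism, at which point you are reproducing the paper's argument.
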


\begin{proof}
From Corollary \ref{cor:vertex_trans}, one can assume that $x=0$ and hence $\c\subset \B (0, r)$. 
Let us prove the result by induction on $r$.

We will consider the extremal points of $\c$, that is the vertices of $\c$ whose distance $r$ to 0 is maximal. For every extremal vertex $v$ of $\c$, we will add a sphere included in the ball $\B(0, r)$ to $\c$ to obtain a new codeword $\c'$ which does not contain the vertex $v$. This procedure will lead to a decomposition of $\c$ as a sum of spheres included in the ball $\B(0, r)$.

\medbreak

If $r=0$, then $\c$ is either the zero codeword or the unique big word with support equal to the vertex $0\in \F_2^n$. But the big word of $\F_2^{2^n}$ with support equal to the vertex $0$ cannot be in $\C (S_n)$. Indeed, this big word has weight $1$ and since $\M (S_n)$ is self--dual, if it had such a big word in its row--space, the word would lie in its kernel. Thus, $\M (S_n)$ would have a zero column which is impossible.
Thus, $\c$ is the big word zero which is the empty formal sum of spheres of radius $1$.

\medbreak

Let $r>0$ and assume that the result holds for all radius $r'<r$.

\medbreak

\noindent {\it Claim.} Let $\rho \leq r$ be the least integer such that $\c \subseteq \B(0, \rho)$. If $\c\neq \emptyset$, then, clearly, $\c \cap \Sp (0, \rho)$ is nonempty. Then, for all $i\in \{1, \ldots , n\}$, there exists $c\in \c \cap \Sp(0, \rho)$ whose $i$-th entry is nonzero.

\medbreak

\noindent {\it Proof of the claim.} Assume the claim is false. Without loss of generality, one can assume that the $n$--th entry of any element of $\c \cap \Sp (0, \rho)$ is zero. Thus, the elements of $\c \cap \Sp (0, \rho)$ are of the form $(m_1|0), \ldots , (m_s | 0)$, where the $m_i$'s $\in \F_2^{n-1}$ and the ``$|$'' denotes the concatenation. 
From, Proposition \ref{prop:Rk_An}, we have $\C (S_n)=\C(S_n)^{\bot}$. Thus, regarding $\c$ as an element of $\C(S_n)^{\bot}$ and using Lemma \ref{lem:dict}(\ref{item:geo_dual}), we see that the intersection of $\c$ with any sphere of radius $1$ has an even cardinality. However, the spheres $\Sp((m_i|1), 1)$ contain one and only one element of $\c$, namely $(m_i|0)$. This yields the contradiction.

\medbreak

Thanks to the claim, we know that there exists at least one element of $\c \cap \Sp (0, \rho)$ with a nonzero $n$--th entry. Let $(\ell_1|1), \ldots , (\ell_t|1)$ be these elements. Clearly, the small words $\ell_i\in \F_2^{n-1}$ have weight $\rho-1$ and hence the spheres $\Sp((\ell_i|0),1)$ are contained in $\B(0, \rho)$. For all $i$, the only element of $\Sp ((\ell_i|0), 1) \cap \Sp (0, \rho)$ whose $n$--th entry is nonzero is $(\ell_i|1)$. Thus, the big word
\begin{equation}\label{eq:d_c}
\d:=\c +\Sp((\ell_1|0),1)+\cdots + \Sp((\ell_t|0),1)
\end{equation}
is contained in $\B(0, \rho)$ and the elements $\d \cap \Sp (0, \rho)$ have all a zero $n$--th entry. Indeed, the $(\ell_i|1)$'s have been cancelled and no other element of the form $(\ell|1)$ have been added while adding the spheres of radius $1$. The claim entails that $\d\subset \B(0, \rho-1)$. By the induction hypothesis, $\d$ is a sum of spheres of radius $1$ contained in $\B (0, \rho -1)$. Since the spheres $\Sp((\ell_i|0),1)$ are also contained in $\B(0, \rho)$, Equation (\ref{eq:d_c}) yields the result. 
\end{proof}

\subsection{The hypercube cover}\label{section:hyper_cover}

\begin{nota}
  In what follows, $m$ denotes an integer. Recall that $S_m$ denotes the canonical basis of $\F_2^m$. Let $W$ be a family of distinct nonzero elements of $\F_2^m \setminus S_m$ with cardinality $w:=\# W$ and assume that $m+w$ is even.
From Proposition \ref{prop:How_to}, the code $\C(S_m \cup W)$ is
self-orthogonal and hence provides a Quantum CSS code with parameters
$[[2^m, 2^m -2\dim \C(S_m\cup W), D ]]$, where $D$ is the minimum
weight of a codeword of $\C(S_m \cup W)^{\bot}\setminus \C (S_m \cup W)$. 
\end{nota}

Regarding the elements of $S_m \cup W$ as column vectors, we introduce the binary $m\times (m+w)$ matrix $M(W)$ whose columns correspond to the elements of $S_m\cup W$, that is
\begin{equation}\label{eq:MS}
M(W):=\left( \left. \begin{array}[c]{ccc}
       & & \\ & I_m & \\ &  &  
    \end{array} \right|
    \ P(W)\ 
 \right),
\end{equation}
where $I_m$ denotes the $m\times m$ identity matrix and $P(W)$ is the matrix whose columns are the elements of $W$.

\begin{thm}\label{thm:d_ball}
  Let $C(W)$ be the code with parity--check matrix $M(W)$. There is a natural graph cover $$\gamma_{W} : \G (\F_2^{m+w}, S_{m+w}) \longrightarrow \G(\F_2^m, S_m\cup W).$$ The degree of $\gamma_{W}$ is $\# C(W)$. In addition, denoting by $d$ the minimum distance of $C(W)$, the restriction of $\gamma_W$ to any ball of radius $\leq \lfloor \frac{d-1} 2 \rfloor$ is an isomorphism of graphs. 
\end{thm}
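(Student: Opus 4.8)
The plan is to construct the covering map explicitly from the linear algebra of $M(W)$ and then analyse its local behaviour using the fact that short linear relations among columns of $M(W)$ are exactly short codewords of $C(W)$. First I would set up the covering map. The matrix $M(W)$ defines a surjective linear map $\mu : \F_2^{m+w} \to \F_2^m$ sending a vector $v$ (indexing coordinates by $S_m \cup W$) to $M(W) v^T$; equivalently $\mu$ sends the $i$-th standard basis vector of $\F_2^{m+w}$ to the corresponding generator in $S_m \cup W \subset \F_2^m$. Since $S_m$ already spans $\F_2^m$, the map $\mu$ is surjective, and $\ker \mu = C(W)$ by the very definition of a parity-check matrix. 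Now define $\gamma_W : \G(\F_2^{m+w}, S_{m+w}) \to \G(\F_2^m, S_m \cup W)$ to be $\mu$ on vertices. To see it is a graph morphism: an edge of the source joins $u$ and $u + e_i$ for $e_i \in S_{m+w}$, and $\mu(u + e_i) = \mu(u) + \mu(e_i)$ where $\mu(e_i) \in S_m \cup W$, so $\mu(u)$ and $\mu(u+e_i)$ are joined in the target. This also shows $\gamma_W$ maps the neighbourhood of $u$ \emph{onto} the neighbourhood of $\mu(u)$, since $\mu$ restricted to $\{e_1, \ldots, e_{m+w}\}$ hits all of $S_m \cup W$.

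Next I would verify the covering property and compute the degree. The key point is that $\gamma_W$ restricted to the neighbours of a vertex $u$ is a \emph{bijection} onto the neighbours of $\mu(u)$: two neighbours $u + e_i$ and $u + e_j$ of $u$ map to the same vertex iff $\mu(e_i) = \mu(e_j)$, i.e. iff the $i$-th and $j$-th generators of $S_m \cup W$ coincide, which happens iff $i = j$ since $W$ consists of distinct elements none of which lie in $S_m$. Hence the local restriction is injective, and combined with surjectivity it is a bijection, so $\gamma_W$ is a covering map. The fibre over a vertex $y$ is the coset $\mu^{-1}(y)$, which has cardinality $\#\ker\mu = \# C(W)$; this is the degree.

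For the local isomorphism statement, let $\rho = \lfloor (d-1)/2 \rfloor$ and fix a vertex $x \in \F_2^{m+w}$. We already know $\gamma_W$ maps $\B(x, \rho)$ onto $\B(\gamma_W(x), \rho)$ and preserves adjacency, so it suffices to show the restriction to $\B(x,\rho)$ is injective on vertices (injectivity on edges then follows, since an edge is determined by its endpoints, and surjectivity on edges follows from surjectivity on neighbourhoods). Suppose $u, u' \in \B(x, \rho)$ with $\gamma_W(u) = \gamma_W(u')$, i.e. $u - u' \in \ker\mu = C(W)$. By the triangle inequality the graph distance between $u$ and $u'$ is at most $2\rho \le d - 1$; but graph distance in $\G(\F_2^{m+w}, S_{m+w})$ is exactly Hamming distance, so $\mathrm{wt}(u - u') \le d - 1 < d$. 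A nonzero codeword of $C(W)$ has weight $\ge d$, so $u - u' = 0$, i.e. $u = u'$. Hence $\gamma_W$ is injective on $\B(x,\rho)$ and therefore a graph isomorphism onto $\B(\gamma_W(x), \rho)$.

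I do not anticipate a serious obstacle here; the argument is a clean translation between the code-theoretic meaning of $M(W)$ (columns as generators, kernel as the code) and the combinatorics of the hypercube (graph distance $=$ Hamming weight). The one point requiring a little care is making sure ``ball of radius $\lfloor (d-1)/2 \rfloor$'' is the exact threshold: the bound $2\rho \le d-1$ is what forces $u - u'$ to have weight strictly less than $d$, and this is tight, so I would state it with the floor precisely as above. A secondary point worth a sentence is the remark, used implicitly and stated in the graph subsection of the preliminaries, that for two vertices in the same fibre at Hamming distance $< d$ the connecting paths stay inside the relevant balls — but for the isomorphism statement only injectivity is needed, so this is not actually required.
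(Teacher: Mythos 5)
Your proposal is correct and follows essentially the same route as the paper: the covering map is induced by the linear map $x \mapsto M(W)x^T$, fibres are cosets of $C(W)=\ker\mu$, and injectivity on balls of radius $\lfloor\frac{d-1}{2}\rfloor$ follows because two vertices in such a ball are at Hamming distance $<d$ while distinct points of a fibre differ by a nonzero codeword of weight $\geq d$. Your explicit verification of the local neighbourhood bijection (using that the elements of $S_m\cup W$ are pairwise distinct) is a detail the paper leaves implicit, but the argument is the same.
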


\begin{proof}
  Recall that we denote the elements of the canonical basis $S_m$ by $e_1, \ldots , e_m$. Denote by $e_1', \ldots e_w'$ the elements of $W$. Consider the linear map
$$
\left\{
  \begin{array}{ccc}
    \F_2^{m+w} & \longrightarrow & \F_2^m \\
    x & \longmapsto & M(W).x^T
  \end{array}
\right. ,
$$
that sends $e_1\mapsto e_1, \ldots, e_m\mapsto e_m, e_{m+1}\mapsto
e_1',\ldots , e_{m+w}\mapsto e_w'$. The covering map $\gamma_{W}$ is
naturally constructed from the above map. One sees easily that the
fibre (preimage) of a vertex $c$ of $\G (\F_2^m, S_m\cup W)$ is nothing but the coset $c+C(W)$ and hence has cardinality $\# C(W)$.

To conclude, consider a ball of $\G(\F_2^{m+w}, S_{m+w})$ of radius $\leq \lfloor \frac{d-1} 2 \rfloor$.
Notice that two vertices $x,x'\in \F_2^{m+w}$ of $\G(\F_2^{m+w}, S_{m+w})$ have the same image by $\gamma_{W}$ if and only if $Mx^T=Mx'^T$, i.e. if and only if $x'=x+c$ with $c\in C(W)$. In particular, two such vertices have the same image only if their distance is $\geq d$. 
Since the distance between any two vertices in a ball of radius $\leq \lfloor \frac{d-1} 2 \rfloor$ is $<d$ then, they have distinct images by $\gamma_{W}$. Thus, the restriction of $\gamma_{W}$ to the ball is an isomorphism.
\end{proof}

\section{On the minimum distance of the Quantum code}\label{sec:dist}

We keep the notation of \S \ref{section:hyper_cover}.
Given a set of generators $S_m\cup W$ of $\F_2^m$ as before, our point is to bound below the minimum distance of the corresponding CSS Quantum code, that is the minimum weight of the set $\C(\F_2^m, S_m\cup W)^{\bot}\setminus \C(\F_2^m, S_m\cup W)$.

\begin{prop}\label{prop:unbound}
We keep the notation of Theorem \ref{thm:d_ball}.
A codeword in $\C (S_m \cup W)^{\bot} \setminus \C (S_m \cup W)$ is not contained in a ball of radius $\lfloor \frac{d-1}{2} \rfloor -2$.
\end{prop}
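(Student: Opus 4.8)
The plan is to argue by contradiction using the covering map $\gamma_W$ of Theorem \ref{thm:d_ball} together with the local structure of the Hamming hypercube established in Proposition \ref{prop:local_sum}. So suppose $\c \in \C(S_m\cup W)^\bot \setminus \C(S_m\cup W)$ is contained in some ball $\B(x,\rho)$ with $\rho = \lfloor \frac{d-1}{2}\rfloor - 2$. By Corollary \ref{cor:vertex_trans} we may assume $x = 0$, so $\c \subset \B(0,\rho)$ in $\G(\F_2^m, S_m\cup W)$. Since $\rho \leq \lfloor\frac{d-1}{2}\rfloor$, Theorem \ref{thm:d_ball} tells us $\gamma_W$ restricts to a graph isomorphism between a ball of radius $\lfloor\frac{d-1}{2}\rfloor$ in the hypercube $\G(\F_2^{m+w}, S_{m+w})$ and the corresponding ball in $\G(\F_2^m, S_m\cup W)$. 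Pick a preimage of $0$; call it $\tilde 0$. Then $\c$ lifts to a big word $\tilde\c$ supported in $\B(\tilde 0,\rho)$ of the hypercube, with the same weight, and the lift commutes with the combinatorial operations (spheres go to spheres, symmetric differences to symmetric differences) as long as everything stays inside the ball where the isomorphism holds.

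The key step is then to check that $\tilde\c$ lies in the row-space of $\M(S_{m+w})$, i.e.\ is a word of $\C(S_{m+w})$, not merely a subset of vertices. The reason is the dual characterization in Lemma \ref{lem:dict}(\ref{item:geo_dual}): $\c \in \C(S_m\cup W)^\bot$ means $\c$ meets every radius-$1$ sphere of $\G(\F_2^m, S_m\cup W)$ in an even number of vertices. Every radius-$1$ sphere meeting $\B(0,\rho)$ is itself contained in $\B(0,\rho+1) \subset \B(0,\lfloor\frac{d-1}{2}\rfloor)$ (this is where the ``$-2$'' slack matters — we need room for the sphere and one more layer), so via the isomorphism this even-intersection property transfers to $\tilde\c$: it meets every radius-$1$ sphere of the hypercube that is near $\tilde 0$ evenly, and it meets the far-away spheres trivially (even, namely zero). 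Since the hypercube code $\C(S_{m+w})$ is self-dual by Proposition \ref{prop:Rk_An}, a big word orthogonal to every radius-$1$ sphere is in $\C(S_{m+w})$; hence $\tilde\c \in \C(S_{m+w})$, and it is contained in $\B(\tilde 0, \rho)$ with $\rho < m+w$.

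Now Proposition \ref{prop:local_sum} applies to $\tilde\c$: it expresses $\tilde\c$ as an $\F_2$-sum of radius-$1$ spheres, each of which is contained in $\B(\tilde 0,\rho)$. Pushing this sum forward through the isomorphism $\gamma_W$ — which carries each such sphere to the radius-$1$ sphere in $\G(\F_2^m, S_m\cup W)$ about the image of its center, all still inside $\B(0,\rho+1)$ where $\gamma_W$ is an isomorphism, so the formal sum is preserved — we conclude that $\c$ is itself an $\F_2$-sum of radius-$1$ spheres of $\G(\F_2^m, S_m\cup W)$, i.e.\ a sum of rows of $\M(S_m\cup W)$. That means $\c \in \C(S_m\cup W)$, contradicting $\c \in \C(S_m\cup W)^\bot \setminus \C(S_m\cup W)$.

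The main obstacle I expect is bookkeeping the radius margins so that every combinatorial object I invoke (the word $\c$, the radius-$1$ spheres touching it, and the one-more-layer needed to run the claim inside the proof of Proposition \ref{prop:local_sum}) genuinely stays inside the region of radius $\leq\lfloor\frac{d-1}{2}\rfloor$ on which $\gamma_W$ is a bona fide isomorphism — this is exactly what forces the loss of $2$ from $\lfloor\frac{d-1}{2}\rfloor$ and it needs to be tracked carefully. A secondary subtlety is justifying that the local isomorphism $\gamma_W$ really does intertwine the algebraic structures (sphere decomposition, symmetric difference, self-duality argument) and not just the underlying graph metric; this should follow because a ball in the hypercube and its image are isomorphic \emph{as graphs}, and spheres, symmetric differences and parity of intersections are all notions definable purely from the graph structure, but it is worth stating explicitly.
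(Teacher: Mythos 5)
Your proof is correct and follows essentially the same route as the paper's: lift $\c$ through the ball isomorphism of Theorem~\ref{thm:d_ball}, verify orthogonality to every radius-$1$ sphere so that the lift lands in the self-dual hypercube code, apply Proposition~\ref{prop:local_sum}, and push the sphere decomposition back down. One tiny arithmetic slip: a radius-$1$ sphere meeting $\B(0,\rho)$ is contained in $\B(0,\rho+2)$, not $\B(0,\rho+1)$ --- but since $\rho+2=\lfloor\frac{d-1}{2}\rfloor$ this is still exactly the range on which $\gamma_W$ is an isomorphism, so the argument is unaffected.
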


\begin{proof}
First, let us quickly sketch this proof. Set $t:=\lfloor \frac{d-1}{2} \rfloor$.
Assume that $\c \in \C (S_m\cup W)^{\bot} \setminus \C (S_m \cup W)$ is contained in a ball of radius $t-2$. Then, using Theorem \ref{thm:d_ball}, we construct a {\it lift} $\c^{\star}$ of $\c$ satisfying 
\begin{enumerate}[(1)]
\item\label{item:orth_c} $\c^{\star} \in \C(\F_2^{m+w},S_{m+w})^{\bot}$;
\item\label{item:bound_c} $\c^{\star}$ is contained in a ball of radius $t-2$;
\item\label{item:gamma_c} $\gamma_{W}(\c^{\star})=\c$, where
  $\gamma_{W}$ is the graph covering map introduced in Theorem \ref{thm:d_ball}. 
\end{enumerate}
From Proposition \ref{prop:Rk_An}, the code $\C(\F_2^{m+w}, S_{m+w})$
is self--orthogonal and hence $\c^{\star} \in \C (\F_2^{m+w},
S_{m+w})$. From Proposition \ref{prop:local_sum}, $\c^{\star}$ is a
sum of spheres contained in the ball of radius $t-2$. From Theorem~\ref{thm:d_ball}, the covering map
$\gamma_{W}$ restricted to a ball of radius $\leq t$ is an isomorphism. Thus, $\c$ is a sum of spheres and hence is a codeword of $\C (\F_2^m, S_m \cup W)$ which leads to a contradiction.

The non-obvious part of the proof is the construction of the lift $\c^{\star}$. It is worth noting that, despite $\gamma_{W}$ inducing an isomorphism between balls of radius $t$, it is however not possible to lift all such big words in a ball of radius $>t-2$. A counter-example is given in Example \ref{ex:non_lift}.

Let us prove the existence of such a lift. Without loss of generality, one can assume that $\c$ is contained in the ball $\B (0, t-2)$. Clearly $\gamma_{W}$ induces an isomorphism between this ball and the ball centred at zero of radius $t-2$ of $\G (\F_2^{m}, S_m \cup W)$.
Let $\c^{\star}$ be the inverse image of $\c$ by this isomorphism. The above conditions (\ref{item:bound_c}) and (\ref{item:gamma_c}) are obviously satisfied. There remains to prove that $\c^{\star}$ has an even number of common elements with any sphere of radius $1$. Since $\c^{\star} \subset \B (0, t-2)$, any sphere which is not contained in $\B (0, t)$ has an empty intersection with $\c^{\star}$. On the other hand, any sphere of radius $1$ contained in $\B (0, t)$ corresponds thanks to $\gamma_{W}$ and Theorem \ref{thm:d_ball} to a unique sphere of radius $1$ contained in the ball of radius $t$ centred at $0$ of $\G (\F_2^{m}, S_m \cup W)$. Thanks to this ball-isomorphism and by definition of $\c$, it is clear that $\c^{\star}$ has an even number of common elements with such a sphere. This yields (\ref{item:orth_c}).
\end{proof}

\begin{ex}\label{ex:non_lift}
  Suppose that $m=5$ and $W=\{(1 1 1 1 1)\}$. Theorem \ref{thm:d_ball}
  asserts the existence of a graph covering map $\gamma: \G(\F_2^6, S_6) \longrightarrow \G (\F_2^5, S_5 \cup W)$. The classical code $C(W)$ defined in Theorem \ref{thm:d_ball} is nothing but the pure repetition code of length $6$ which has minimum distance $6$. Therefore, $\gamma$ induces isomorphisms between balls of radius $2$. Let us show that some big words $\c\in \C (\F_2^5, S_5\cup W)^{\bot}$ contained in $\B (0, 2)$ in $\G(\F_2^5, S_5 \cup W)$ cannot lift as in the previous proof as a word $\c^{\star} \in \C(\F_2^6, S_6)^{\bot}=\C(\F_2^6, S_6)$.
Let
$$\c:=\{x\in \F_2^5,\ wt(x)=2\}.$$
Let us show that $\c \in \C(\F_2^5, S_5\cup W)^{\bot}$. Let $m\in \F_2^5$, we have to prove that $\Sp (m, 1)$ is orthogonal to $\c$, that is has an even number of common elements with $\c$. Notice that for the graph $\G(\F_2^5, S_5\cup W)$, we have 
$$
\forall m\in \F_2^5,\ \Sp(m,1):=\{m+e_1, \ldots, m+e_5, m+e_1+\cdots +e_5\}.
$$
It is clear that if $wt(m)=0, 2, 4$ or $5$, then $\Sp(m, 1)$ contains no element of weight $2$ and hence is obviously orthogonal to $\c$.
If $wt(m)=1$, then $m=e_i$ for some $i$ and $\Sp(m,1)$ contains four elements of weight $2$, namely all the $e_i+e_j$ with $i\neq j$. If $wt(m)=3$, that it $m=e_i+e_j+e_k$ for $i,j,k$ distinct to each other, then $\Sp(m, 1)$ contains also four elements of weight $2$, namely $e_i+e_j, e_i+e_k, e_j+e_k$ and $e_s+e_{\ell}$, where $\{s, \ell\}=\{1, \ldots, 5\}\setminus \{i,j,k\}$. 

Thus, $\c\in \C(\F_2^5, S_5\cup W)^{\bot}$.
From Theorem \ref{thm:d_ball}, the map $\gamma$ induces an isomorphism from the ball $\B(0,2)$ of $\G (\F_2^6, S_6)$ and the ball $\B (0, 2)$ of $\G(\F_2^5, S_5\cup W)$. 
Let us consider the lift of $\c$ by this isomorphism
$$
\c^{\star}:=\sum_{m\in \c} (m|0)\ \in \F_2^6.
$$
This big word is not an element of $\C(\F_2^6, S_6)^{\bot}$ (which equals $\C (\F_2^6, S_6)$). Indeed, let $m:=e_1+e_2+e_3\in \F_2^6$ be a small word of weight $3$. Then, $\Sp (m, 1)$ has exactly three common elements with $\c^{\star}$, namely $e_1+e_2, e_1+e_3$ and $e_2+e_3$. Thus, $\c^{\star}$ is not orthogonal to this sphere of radius $1$.
\end{ex}

\begin{thm}\label{thm:min_dist}
  Let $W$ be a family of $w$ vectors of $\F_2^m$ with $w>0$. Let
  $M(W)$ be as in (\ref{eq:MS}) and $C(W)$ be the small code of length
  $n:=m+w$ and parity--check matrix $M(W)$. Recall that $S_m$ denotes the
  canonical basis of $\F_2^m$. Let $d\geq 9$ be the minimum distance of $C(W)$. Then, the minimum distance $D$ of $\C (S_m\cup W)^{\bot} \setminus \C(S_m\cup W)$ and hence of the corresponding Quantum code satisfies
$$
D\geq \frac{1}{640} dn^2.
$$
\end{thm}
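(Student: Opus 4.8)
The plan is to lower-bound the weight of a minimum-weight codeword $\c\in\C(S_m\cup W)^{\bot}\setminus\C(S_m\cup W)$. Write $S=S_m\cup W$ and $t=\lfloor\frac{d-1}{2}\rfloor$; since $d\geq 9$ we have $t\geq 4$. By Corollary~\ref{cor:vertex_trans} translations are harmless, and by Proposition~\ref{prop:unbound} the word $\c$ is contained in \emph{no} ball $\B(x,t-2)$ of $\G(\F_2^m,S)$. The argument has two essentially independent ingredients: a \emph{local} lower bound of order $n^2$ on the number of vertices of $\c$ in a small ball around any of its vertices, and a \emph{spreading} argument producing of order $d$ pairwise-distant vertices of $\c$ at which to apply the local bound; their product yields the $\Theta(dn^2)$ estimate, and $d\geq 9$ is exactly what is needed to run the local bound.

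\emph{The local estimate.} Fix $y\in\c$. Since $4\leq t=\lfloor\frac{d-1}{2}\rfloor$, Theorem~\ref{thm:d_ball} says $\gamma_W$ restricts to a graph isomorphism from $\B(y,t)$ onto a radius-$t$ ball of the Hamming hypercube $\G(\F_2^n,S_n)$; composing with a translation sending $y$ to $0$, transport $\c\cap\B(y,t)$ to a set $\c^{\star}\subseteq\B(0,t)$ of the hypercube, with $0\in\c^{\star}$. Exactly as in the proof of Proposition~\ref{prop:unbound}, $\c^{\star}$ is orthogonal to every radius-$1$ sphere of $\G(\F_2^n,S_n)$ centred in $\B(0,t-1)$, hence (using $t-1\geq 3$) to every such sphere centred at a vertex of Hamming weight $\leq 3$. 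I would exploit these constraints inside $\B(0,4)$. Orthogonality to the spheres $\Sp(e_i,1)$, which meet $\B(0,4)$ in $\{0\}$ together with the weight-$2$ vectors through $e_i$, forces the weight-$2$ vectors of $\c^{\star}$ to form a graph $G'$ on $\{1,\dots,n\}$ in which every vertex has odd degree, so $|E(G')|\geq n/2$. Then orthogonality to the spheres $\Sp(v,1)$ with $\mathrm{wt}(v)=3$ shows that whenever the three coordinate-pairs inside such a $v$ meet $\c^{\star}$ in an odd number, some weight-$4$ vector of $\c^{\star}$ has support containing that of $v$. A short case analysis on $G'$ — separating the regime where $G'$ has small maximum degree (so the number of triples carrying an odd number of edges is $\gtrsim |E(G')|\cdot n\gtrsim n^2$) from the regime where $G'$ has a vertex of large degree (which already makes $|E(G')|$ large, or produces many odd triples through that vertex) — then yields $|\c^{\star}\cap\B(0,4)|\geq c\,n^2$ for an absolute $c>0$. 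Since $\gamma_W$ identifies $\B(y,4)$ with $\B(0,4)$, this gives $|\c\cap\B(y,4)|\geq c\,n^2$ for every $y\in\c$.

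\emph{Spreading.} The radius of $\c$ is $\geq t-1$, hence so is its diameter: there are $p,q\in\c$ with $d(p,q)\geq t-1$. Fix a geodesic $p=g_0,g_1,\dots,g_{\delta}=q$, $\delta\geq t-1$. I would next show that $\c$ meets $\B(g_j,4)$ for every interior index $j$ (say $4\leq j\leq\delta-4$); choosing one such meeting vertex $p_j\in\c$ for $j=4,21,38,\dots$ (step $17$) produces roughly $\delta/17\gtrsim d/34$ vertices of $\c$ that are pairwise at distance $\geq 9$, so the balls $\B(p_j,4)$ are pairwise disjoint. Summing the local estimate over them gives $|\c|\geq\sum_j|\c\cap\B(p_j,4)|\geq\frac{d}{34}\,c\,n^2$, and tracking the constants through the two steps gives the stated $D=|\c|\geq\frac{1}{640}\,dn^2$.

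\emph{Main obstacle.} The delicate point is the spreading step, specifically the claim that $\c$ must pass within a bounded distance of every interior vertex of a geodesic joining two of its vertices. A minimum-weight $\c\in\C^{\bot}\setminus\C$ need not induce a connected subgraph of $\G(\F_2^m,S)$, and a set of large radius can in principle be covered by few small balls once it is disconnected; so one cannot simply "walk inside" $\c$, and genuinely extracting $\Omega(d)$ pairwise-distant vertices of $\c$ (rather than merely the two endpoints $p,q$) is the crux of the proof. One must argue, using that $\c$ is a cocycle for the sphere-incidence structure and that the local geometry is hypercube-like up to radius $t$, that such a "separating" configuration cannot occur. A secondary, more routine, difficulty is nailing down the absolute constant in $|\c^{\star}\cap\B(0,4)|\geq cn^2$ carefully enough to reach the clean value $\frac{1}{640}$.
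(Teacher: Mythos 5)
Your local estimate is essentially the paper's Lemma \ref{lem:card_ball}: the paper proves $wt(\c\cap\B(y,4))\geq n^2/32$ by the route you sketch (orthogonality to the spheres $\Sp(e_i,1)$ forces an odd number, hence at least one, weight-$2$ element of $\c$ through each coordinate, giving $\geq n/2$ such elements; then a two-case analysis --- the paper splits on the size of a maximal disjoint-support family rather than on the maximum degree, and uses minimality of $\c$ to bound degrees). But the spreading step, which you yourself flag as the crux and leave unresolved, is a genuine gap, and your geodesic formulation points in a direction that does not obviously close it: knowing that $\c$ avoids $\B(g_j,4)$ for an interior geodesic vertex $g_j$ does not separate $\c$ into two distant pieces, so neither the cocycle condition nor minimality gives you a contradiction there.

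The paper's resolution is the missing idea: work with concentric spheres around a single point $0\in\c$ rather than with a geodesic between two points of $\c$. The claim is that there are no two consecutive radii $i,i+1\leq t-2$ with $\c\cap\Sp(0,i)=\c\cap\Sp(0,i+1)=\emptyset$. If there were, $\c$ would split as $\c_1\cup\c_2$ with $\c_1=\c\cap\B(0,i-1)$ and $\c_2$ entirely outside $\B(0,i+1)$; these pieces are at mutual distance $\geq 3$, so every radius-$1$ sphere (whose elements are pairwise at distance $\leq 2$) meets at most one of them, whence both lie in $\C(S_m\cup W)^{\bot}$; and since $\c=\c_1+\c_2\notin\C(S_m\cup W)$, at least one of them is a strictly lighter element of $\C^{\bot}\setminus\C$, contradicting minimality of $\c$. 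Combined with Proposition \ref{prop:unbound} (which guarantees $\c\not\subseteq\B(0,t-2)$), this yields points of $\c$ at distances roughly $10,20,30,\dots$ from $0$, i.e. at least $\lceil(t-1)/10\rceil\geq d/20$ points of $\c$ whose radius-$4$ balls are pairwise disjoint, and $\frac{d}{20}\cdot\frac{n^2}{32}=\frac{dn^2}{640}$. In short, the disconnection scenario you worry about is not excluded geometrically; it is excluded by minimality, and only for the specific ``annulus'' type of separation --- which is precisely why the concentric-sphere formulation, not the geodesic one, is the one that works.
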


This theorem is proved further thanks to the following technical lemma.

\begin{lem}\label{lem:card_ball}
  Let $m$ and $W$ be as in Theorem \ref{thm:min_dist}. Assume moreover that the minimum distance of the small code $C(W)$ is at least $9$.
Let $\c$ be a big word of minimum weight in $\C (S_m \cup W)^{\bot}\setminus \C(S_m \cup W)$ and let $x\in \c$ then
$$
wt(\c \cap \B (x, 4))\geq \frac{n^2}{32} \cdot
$$
\end{lem}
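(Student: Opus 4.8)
The plan is to exploit the local isomorphism between $\G(\F_2^m, S_m\cup W)$ and the Hamming hypercube $\G(\F_2^{m+w}, S_{m+w})$ provided by Theorem~\ref{thm:d_ball}, together with a counting argument on how a nonzero codeword $\c \in \C(S_m\cup W)^\bot\setminus\C(S_m\cup W)$ must interact with small spheres. Since $d\geq 9$ we have $t:=\lfloor (d-1)/2\rfloor\geq 4$, so balls of radius $4$ in the Cayley graph lift isomorphically to balls of radius $4$ in the hypercube. By Corollary~\ref{cor:vertex_trans} we may assume $x=0\in\c$, and work inside $\B(0,4)$, which I identify via $\gamma_W$ with the ball of radius $4$ around the origin in the Hamming hypercube $\G(\F_2^n,S_n)$ — note that in the hypercube, $\Sp(0,1)=\{e_1,\dots,e_n\}$ and more generally $\Sp(0,\rho)$ is exactly the set of weight-$\rho$ vectors.

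First I would record the dual condition locally: since $\c\in\C(S_m\cup W)^\bot$, for every vertex $y$ with $\B(y,1)\subset\B(0,4)$ (i.e. $y$ at distance $\leq 3$ from $0$), the sphere $\Sp(y,1)$ meets $\c$ in an even number of points; transported to the hypercube, the same holds for spheres of radius $1$ centred at hypercube-vertices of weight $\leq 3$. The key point is that $\c$ cannot be too sparse near $0$: if $\c\cap\B(0,4)$ were small, I want to derive that $\c\cap\B(0,4)$ is itself a sum of radius-$1$ spheres contained in $\B(0,4)$, hence (pulling back through the ball-isomorphism and Proposition~\ref{prop:local_sum}) that locally $\c$ agrees with a word of $\C(S_m\cup W)$; iterating this over all translates (using vertex-transitivity) would force $\c\in\C(S_m\cup W)$, a contradiction. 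So the real content is a quantitative ``local rigidity'' statement in the hypercube: a set of vertices inside $\B(0,4)$ that is orthogonal to all radius-$1$ spheres centred at weight-$\leq 3$ vertices, and which we are trying to show must be large unless it is a legitimate local sum of spheres.

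Concretely, here is the counting I would set up. Let $a_\rho := wt(\c\cap\Sp(0,\rho))$ for $\rho=0,1,2,3,4$ (in hypercube coordinates). Using the evenness of $\c\cap\Sp(y,1)$ for $y$ of weight $0,1,2,3$ one gets linear relations among consecutive layers. For instance, summing the condition over all weight-$0$ $y$ (just $y=0$) gives $a_1\equiv 0\pmod 2$. Summing over weight-$1$ vertices $y=e_i$: each such sphere contains $e_i$ itself (from layer $1$... wait, $\Sp(e_i,1)$ in the hypercube is $\{0\}\cup\{e_i+e_j: j\neq i\}$), so it contributes a relation coupling the vertex $0$, and the weight-$2$ vertices containing $e_i$. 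Double-counting incidences between $\c\cap\Sp(0,2)$ and weight-$1$ spheres, and between $\c\cap\Sp(0,3)$ and weight-$2$ spheres, yields inequalities of the shape: if $a_2$ is small then most weight-$1$ spheres are ``forced'', constraining $a_1$ and $a_3$; propagating, if $a_2,a_3,a_4$ were all $o(n^2)$ one would contradict the fact that $\c\notin\C(S_m\cup W)$ locally. The cleanest route is probably: show that if $wt(\c\cap\B(0,4)) < n^2/32$ then $\c\cap\B(0,4)$ is a sum of at most $n$ spheres of radius $1$ centred inside $\B(0,3)$ (each sphere has weight $n$, giving $\leq n^2$, but the refined estimate $n^2/32$ comes from controlling how many distinct spheres can be involved given the weight budget and the layer sizes), pull this back to $\G(\F_2^m,S_m\cup W)$, and then observe that being a local sum of spheres around \emph{every} vertex forces $\c$ into $\C(S_m\cup W)$ globally, contradicting $\c\notin\C(S_m\cup W)$.

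The main obstacle I anticipate is the combinatorial bookkeeping that produces the explicit constant $1/32$: one has to track incidences across four layers of the hypercube, where layer $\rho$ has $\binom{n}{\rho}$ vertices, and argue that a codeword which is ``locally small'' must be explained by few spheres, while the spheres themselves have weight $n$ and overlap in controlled ways — the interplay between ``few spheres'' (to keep weight $\leq n^2/32$) and ``enough spheres to cover a nonzero local pattern'' is delicate, and the parity constraints from Lemma~\ref{lem:dict}(\ref{item:geo_dual}) must be used carefully to rule out cancellation. A secondary subtlety is making sure the lift/pullback through $\gamma_W$ is legitimate at radius $4$: this needs $t=\lfloor(d-1)/2\rfloor\geq 4$, i.e. $d\geq 9$, which is exactly the hypothesis, and one must check the sphere-orthogonality transports correctly as in the proof of Proposition~\ref{prop:unbound} (spheres not contained in $\B(0,4)$ meet $\c\cap\B(0,4)$ trivially, the rest correspond bijectively). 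Once the lemma is in hand, Theorem~\ref{thm:min_dist} follows by covering $\c$ with disjoint-ish balls $\B(x_i,4)$ centred at a $9$-net (or similar spacing argument) of $\c$ and summing, which I would carry out after this lemma.
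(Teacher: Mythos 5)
Your setup is the same as the paper's (reduce to $x=0$ via Corollary~\ref{cor:vertex_trans}, identify $\B(0,4)$ with a ball of the hypercube via Theorem~\ref{thm:d_ball} using $d\geq 9$, and transport the orthogonality to radius-$1$ spheres centred at vertices of weight $\leq 3$), but the heart of the lemma --- the counting that actually produces $n^2/32$ --- is absent, and the strategy you sketch in its place does not work. First, the ``local rigidity'' route is broken at both ends: the restriction $\c\cap\B(0,4)$ is \emph{not} orthogonal to spheres centred at weight-$4$ vertices (those spheres leave the ball and $\c$ generally has support outside it), so Proposition~\ref{prop:local_sum} does not apply to it and you cannot conclude it is a sum of spheres; and even if you could write $\c\cap\B(y,4)$ as a local sum of spheres for every $y$, these local representations need not patch into a global one, so no contradiction with $\c\notin\C(S_m\cup W)$ follows. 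Indeed a single sphere $\Sp(y,1)$ is itself a word of $\C(S_m\cup W)$ of weight only $n\ll n^2/32$, so ``locally a sum of few spheres'' is not in tension with anything. Second, you never use the hypothesis that $\c$ has \emph{minimum} weight in $\C(S_m\cup W)^\bot\setminus\C(S_m\cup W)$ in a quantitative way; the paper needs the inequality $wt(\c+\Sp(y,1))\geq wt(\c)$ to control one of its two cases.

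The actual argument is a direct incidence count, not a rigidity statement. Since $0\in\c$ and $\c$ is orthogonal to each $\Sp(e_i,1)$, every index $i$ appears in some weight-$2$ element of $\c$, so $\c\cap\Sp(0,2)$ has at least $n/2$ elements; pick a maximal matching $e_1+e_2,\dots,e_{k-1}+e_k$ inside it. If $k\geq n/4$, consider the $\tfrac{k}{2}(n-2)$ spheres $\Sp(e_i+e_{i+1}+e_s,1)$ with $i$ odd: each must contain a second element of $\c$ (in layer $2$ or $4$), and each candidate element serves at most $4$ such spheres, giving $\Omega(kn)=\Omega(n^2)$ elements in $\B(0,4)$. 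If $k\leq n/4$, maximality of the matching forces, for each $\ell>k$, some $e_{i_\ell}+e_\ell\in\c$ with $i_\ell\leq k$; the minimality of $wt(\c)$ (via $wt(\c+\Sp(e_{i_\ell},1))\geq wt(\c)$) shows $e_{i_\ell}$ cannot lie in too many weight-$2$ elements of $\c$, which yields at least $(n-k)(\tfrac n2-1-k)$ usable spheres $\Sp(e_{i_\ell}+e_\ell+e_s,1)$, and the same multiplicity-$4$ bound again gives $\Omega(n^2)$. Your layer-count relations $a_0,\dots,a_4$ point vaguely in this direction, but without the role of $0\in\c$ forcing the matching, the case split on $k$, and the minimality hypothesis, they do not yield a quadratic lower bound.
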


\begin{proof}
From Corollary \ref{cor:vertex_trans}, one can assume that the small word $x$ of the statement is $0$.
From Theorem \ref{thm:d_ball}, the ball $\B(0, 4)$ of $\G (S_m\cup W)$ is isomorphic to that of $\G (S_{m+w})$. Therefore, as soon as we stay inside $\B (0,4)$, we can reason as if we were inside that of $\G (S_{m+w})$. Thus, set $n:=m+w$ and let us reason in $\G (S_n)$.

\medbreak

\noindent {\it Step 1.} First, it is important to notice that $\c$ is supposed to have a minimum weight as a word of $\C (S_m \cup W)^{\bot} \setminus \C (S_m \cup W)$, therefore,
\begin{equation}
  \label{eq:weight_c}
  \forall x\in \F_2^m,\quad wt(\c+\Sp (x,1))\geq wt(\c).
\end{equation}

\medbreak

\noindent {\it Step 2.} Since by assumption $0\in \c$ and $\c \in \C(S_m\cup W)^{\bot}$, this word must be orthogonal to any sphere of radius $1$. Denote by $e_1, \ldots , e_n$ the elements of the canonical basis $S_n$ of $\F_2^n$. Then $\c$ must be orthogonal to the spheres $\Sp (e_i, 1)$. Thus, 
\begin{equation}
  \label{eq:exists}
  \forall i \in \{1, \ldots, n\},\qquad \exists j\neq i,\ {\rm such }\ {\rm that}\ e_i+e_j\in \c.
\end{equation}

Thus, $\c$ contains at least $n/2$ small words in $\Sp (0,2)$.

Now, consider the maximal subset of elements of  $\c \cap \Sp (0,2)$ with disjoint supports. After reordering the indexes, one can assume that these elements are $e_1+e_2, \ldots, e_{k-1}+e_k$, for some $k \leq n$. We will get the result by considering separately the situations ``$k$ is large'' and ``$k$ is small''.

\medbreak

\noindent {\it Step 3.} If $k\geq n/4$, then for all odd $i\leq k$ and  $s\notin \{i, i+1\}$, consider the sphere $\Sp (e_i+e_{i+1}+e_s,1)$. Since $\c$ is orthogonal to any sphere of radius $1$ and contains $e_i+e_{i+1}$, it should contain at least one other element of $\Sp (e_i+e_{i+1}+e_s,1)$. This other element is either in $ \Sp (0, 2)$ or in $\Sp (0, 4)$. 
\begin{itemize}
\item If this other element of $\Sp (e_i+e_{i+1}+e_s,1) \cap \c$ is in $\Sp (0, 2)$, then it is either $e_i+e_s$ or $e_{i+1}+e_s$. This additional element is in at most one other sphere of the form $\Sp (e_j+e_{j+1}+e_t, 1)$ with $j$ odd, $j\leq k$ and $t\notin \{j, j+1\}$.
\item If this other element is in $\Sp (0, 4)$, then it is of the form $e_i+e_{i+1}+e_s+e_t$ for some $t \notin \{i, i+1, s\}$. For obvious degree reasons this additional element is in at most $4$ spheres of degree $1$ centred at a small word of weight $3$.
\end{itemize}

\begin{figure}[!h]
  \centering
  \includegraphics[scale=0.4]{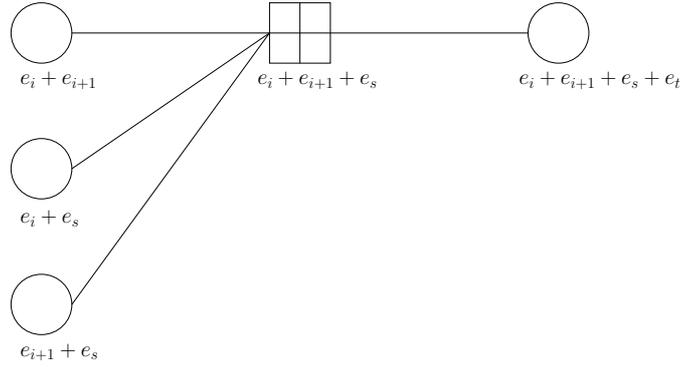}
  \caption{The nodes involved in Step 3.}
\end{figure}

\noindent Finally, Since there are $\frac{k}{2}(n-2)$ spheres of the form $\Sp (e_j+e_{j+1}+e_t, 1)$ with $j$ odd, $j\leq k$ and $t\notin \{j, j+1\}$, there are at least $\frac{1}{4} \frac{k}{2}(n-2)$ additional elements in $\c$ lying in $\Sp (0, 2)\cup \Sp (0,4)$. Therefore, considering also $0$ and the elements $e_1+e_2, \ldots , e_{k-1}+e_k$ we get
$$
wt(\c)\geq 1+\frac{k}{2}+\frac{1}{8}k(n-2).
$$
Since, by assumption $k\geq n/4$ we conclude that
\begin{equation}
  \label{eq:klarge}
wt(\c)\geq 1+\frac{n}{8}+\frac{1}{32}n(n-2) \geq \frac{n^2}{32} \cdot  
\end{equation}

\medbreak

\noindent {\it Step 4.} Now, assume that $k \leq n/4$.
From (\ref{eq:exists}) and by maximality of the set $\{e_1+ e_2, \ldots , e_{k-1}+e_k\}$, for all $\ell\geq k$, there exists at least an integer $j\leq k$ such that $e_j+e_{\ell}\in \c$. For all $\ell \geq k$ let us choose such an integer $i_{\ell}$ such that $i_{\ell}\leq k$ and $e_{i_{\ell}}+e_{\ell}\in \c$.

\medbreak

\noindent {\it Claim.} For all $\ell >k$, there exist at least $\frac{n}{2}-1$ integers $s\in \{1, \ldots , n\}$ such that $e_{i_{\ell}}+e_s \notin \c$.

Indeed, if there were $t\geq \frac{n}{2}$ integers $s_1, \ldots, s_t$ such that $e_{i_{\ell}}+e_{i_s}\in \c$, then
$$
wt(\c+\Sp (e_{i_{\ell}}, 1))>wt(\c),
$$
which contradicts (\ref{eq:weight_c}).

\medbreak

Let us consider the spheres $\Sp (e_{i_{\ell}}+e_{\ell}+e_s, 1)$ for $\ell >k$, $s>k$ and $e_s+e_{i_{\ell}}\notin \c$. Thanks to the previous Claim, we know that there exists at least $(n-k)(\frac{n}{2}-1-k)$ such spheres.
By definition $\c$ is orthogonal to any sphere of radius $1$. In particular $\c$ is orthogonal to $\Sp (e_{i_{\ell}}+e_{\ell}+e_s, 1)$ which contains $e_{i_{\ell}}+e_{\ell}$. Consequently, $\c$ contains at least another vertex of $\Sp (e_{i_{\ell}}+e_{\ell}+e_s, 1)$. Since, by assumption $e_{i_{\ell}}+e_s \notin \c$, the additional vertex is of the form
\begin{enumerate}[(a)]
\item\label{item:ls} either $e_{\ell}+e_s$;
\item\label{item:ilst} or $e_{i_{\ell}}+e_{\ell}+e_s+e_t$ for some integer $t\notin \{i_{\ell}, \ell, s, t\}$.
\end{enumerate}

\begin{figure}[!h]
  \centering
  \includegraphics[scale=0.4]{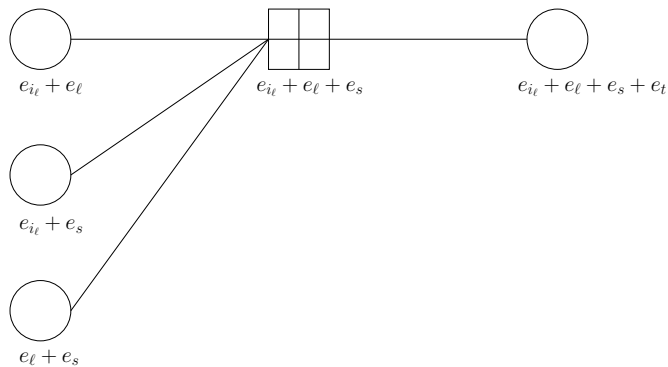}
  \caption{The nodes involved in Step 4.}
\end{figure}

\noindent Case (\ref{item:ls}) cannot happen since it would contradict the maximality of the set $\{e_1+e_2, \ldots , e_{k-1}+e_k\}$.
Since, for obvious degree reasons, a vertex of type (\ref{item:ilst}) is contained in at most $4$ spheres of radius $1$ centred at a small word of weight $3$, then the spheres of the form $\Sp (e_{i_{\ell}}+e_{\ell}+e_s, 1)$ (whose number is at least $(n-k)(\frac{n}{2}-1-k)$) provide at least 
$\frac{1}{4}(n-k)(\frac{n}{2}-1-k)$ additional vertices.
Considering the vertex $0$ and the $\frac{k}{2}$ vertices $e_1+e_2, \ldots , e_{k-1}+e_k$ together with the above-described set of additional vertices, we get
\begin{equation}
  \label{eq:Smallk}
wt(\c)\geq 1+\min_{0\leq k\leq n/4}
\left\{\frac{k}{2}+\frac{1}{4}\left(n-k\right)\left(\frac{n}{2}-1-k\right) \right\}=\frac{3n}{16}\left( \frac{n}{4}-1\right)=\frac{3n^2}{64}-\frac{n}{16}+1 \cdot
\end{equation}

\medbreak

\noindent {\it Final step.} Compare (\ref{eq:klarge}) and (\ref{eq:Smallk}). For all $n$, we have $\frac{n^2}{32}\leq \frac{3n^2}{64}-\frac{n}{16}+1$ and hence $wt(\c)\geq \frac{n^2}{32} \cdot$
\end{proof}

\begin{rem}
  In \cite{cdz11}, the statement \cite[Lemma 8]{cdz11} corresponding to Lemma \ref{lem:card_ball} of the present article is false since it refers to big words of $\C (\F_2^n, S_n)$ and not $\C (\F_2^{m}, S_m \cup W)$.
But the result does not hold for $\C (\F_2^n, S_n)$.
Indeed, spheres of radius $1$ are elements of $\C (\F_2^n, S_n)$ which cannot have a weight quadratic in $n$.
It is then necessary to state the result for elements of $\C (\F_2^m, S_m \cup W)$ even if in the proof we reason locally and can do as if we worked in $\G (\F_2^n, S_n)$.
\end{rem}

\begin{proof}[Proof of Theorem \ref{thm:min_dist}]
  Let $\c$ be a minimum weight codeword in $\C (S_m \cup W)^{\bot} \setminus \C(S_m \cup W)$. From Corollary \ref{cor:vertex_trans}, one can assume that $0\in \c$.
Set $t:= \lfloor \frac{d-1}{2} \rfloor $.
 From Proposition \ref{prop:unbound}, this word is not contained in $\B (0, t-2)$. Thus, in the worst case $\c \subset \B (0, t-1)$.


\medbreak

\noindent {\it Claim.} There are no two consecutive integers $i, i+1\leq t -2$ such that $\c \cap \Sp (0, i)= \c \cap \Sp (0, i+1)=\emptyset$.

Indeed, if both sets were empty, then $\c$ would split into two disjoint sets $\c_1 \cup \c_2$ where $\c_1 = \c \cap \B (0, i-1)$ and $\c_2:=\c \setminus \c_1$. Since the distance between $\c_1$ and $\c_2$ is at least $2$, any sphere of radius $1$ intersects at most one of the words $\c_1, \c_2$. Hence, since $\c$ is orthogonal to any sphere of radius $1$, so are $\c_1$ and $\c_2$. Thus, $\c_1, \c_2 \in \C(S_m\cup W)^{\bot}$ and, by definition of $\c$, at least one of them is not in $ \C(S_m\cup W)$. This contradicts the assumption ``$\c$ has minimum weight in $\C (S_m \cup W)^{\bot} \setminus \C(S_m \cup W)$''.

\medbreak

Thanks to the Claim and Proposition \ref{prop:unbound}, one shows that there exist at least $\lceil \frac{t-1}{10} \rceil$ disjoint balls of radius $4$ centred at an element of $\c$.
Indeed, the worst case is sketched as follows: $\B (0, 4)$ covers every element of $\c$ of weight $\leq 4$, then, from the claim, there exists at least one element $m_1\in \c$ in $\Sp (0, 9)\cup \Sp (0, 10)$. The worst case is when $m_1$ has weight $10$. Then consider the ball $\B (m , 4)$, which is clearly disjoint from $\B (0, 4)$. By the same manner, one uses the claim to assert the existence of an element $m_2$ of $\c$ of weight $19$ or $20$ and consider $\B (m_2, 4)$ and so on...

One shows easily that $\lceil \frac{t-1}{10} \rceil \geq \frac{d}{20}$.
Then, using Lemma \ref{lem:card_ball}, we get the result.
\end{proof}

\section{The Quantum code associated to the classical repetition code}\label{sec:repet}

In this section we answer  a question raised by Mitchison {\it et al.} in \cite{MMS}. Namely, we give the exact parameters of the Quantum code $Q_n$ associated to the classical pure repetition code. That is the Quantum code associated to the Cayley graph $\G (\F_2^n, S_n \cup \{(11\ldots 1)\})$, where $n$ denotes an odd integer and $S_n$ denotes the canonical basis of $\F_2^n$.

In what follows, $n\geq 3$ is an odd integer, $S_n'$ denotes the set of generators $S_n':=S_n \cup \{(11\ldots 1)\}$
and $H_n$ is the $(n-1)\times n$ parity--check matrix of the $[n,1,n]$ 
repetition code whose columns consist of the elements of $S_n'$.
$$H_n=
 \left(
 \begin{array}{cccc}
 1 &        & 0  & 1\\ 
   & \ddots &   & \vdots\\
 0  &        & 1 & 1\\
 \end{array}
 \right).
$$

Our goal is to prove the following theorem.
\begin{thm}\label{thm:main}
The Quantum code associated to the repetition code, i.e. to $\G (\F_2^n, S_n \cup \{(11\ldots 1)\})$ has parameters:
$$
[[N = 2^{n}, K = 2^{\frac{n+1}{2}}, D = 2^{\frac{n-1}{2}}]].
$$
\end{thm}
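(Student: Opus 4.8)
The plan is to compute the dimension $K$ and the minimum distance $D$ of the quantum code $Q_n$ separately, using the group-algebra description of $\C(S_n')$. Work in $A := \F_2[\F_2^n] \cong \F_2[X_1,\dots,X_n]/(X_1^2-1,\dots,X_n^2-1)$, and put $\pi := \pi_{S_n'} = X_1 + \cdots + X_n + X_1\cdots X_n$. By Lemma \ref{lem:mult_pi} the matrix $\M(S_n')$ represents multiplication by $\pi$, and since $S_n'{}^{-1}=S_n'$ we have (Lemma \ref{lem:get_nilp}) $\pi^2 = 0$. Thus $\C(S_n')=\pi A$ and $\C(S_n')^\perp = \mathrm{Ann}(\pi) = \{f : \pi f = 0\}$, where the identification of the dual with the annihilator uses the standard nondegenerate symmetric bilinear form on $A$ for which multiplication by $\pi$ is self-adjoint (this is exactly the content of $\M(S_n')$ being symmetric). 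Since $\pi^2=0$ we have $\pi A \subseteq \mathrm{Ann}(\pi)$, consistent with self-orthogonality, and $K = \dim \mathrm{Ann}(\pi) - \dim \pi A = 2^n - 2\,\Rk\M(S_n')$.

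For the dimension: I would compute $\Rk \M(S_n') = \dim \pi A = 2^n - \dim \mathrm{Ann}(\pi)$, equivalently $\dim \coker(\phi_{S_n'}) = \dim A/\pi A$. Now $A/\pi A = \F_2[X_1,\dots,X_n]/(X_i^2-1,\ \pi)$. The trick used in the proof of Proposition \ref{prop:Rk_An} should adapt: use the relation $\pi=0$ to eliminate $X_1\cdots X_n = X_1+\cdots+X_n$, and exploit that $n$ is odd so that squaring the relation gives $(X_1\cdots X_n)^2 = 1$ on one side and a sum of $X_i^2-1$ terms plus lower-order corrections on the other, forcing an extra quadratic relation among $X_1,\dots,X_{n-1}$. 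Carrying this through carefully should show $\dim A/\pi A = 2^{n-1} + 2^{\frac{n-1}{2}}$, so that $\Rk\M(S_n') = 2^{n-1} - 2^{\frac{n-1}{2}}$ and hence $K = 2^n - 2(2^{n-1}-2^{\frac{n-1}{2}}) = 2^{\frac{n+1}{2}}$. Alternatively, and perhaps more robustly, one can diagonalize $\phi_{S_n'}$ over $\overline{\F_2}$: the characters of $\F_2^n$ are indexed by $y\in\F_2^n$, with $\pi$ acting on the $y$-eigenspace by the scalar $\lambda_y = \sum_{i}(-1)^{y_i} + (-1)^{y_1+\cdots+y_n} \in \F_2$-lift; count the $y$ with $\lambda_y=0$. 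Over the binary field one must instead count, via the $\F_2$-Fourier/MacWilliams-type argument, the dimension of the kernel directly — I expect $\dim\Ker\phi_{S_n'} = \dim\mathrm{Ann}(\pi) = 2^{n-1}+2^{\frac{n-1}{2}}$, giving $K$ as above.

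For the minimum distance: $D$ is the minimum weight of a word in $\mathrm{Ann}(\pi)\setminus \pi A$. The lower bound $D \geq 2^{\frac{n-1}{2}}$ is the serious part. The idea is to take $\c \in \mathrm{Ann}(\pi)$ of minimum weight not in $\pi A$; using Corollary \ref{cor:vertex_trans} assume $0 \in \c$, and then leverage Lemma \ref{lem:dict}(\ref{item:geo_dual}): every sphere $\Sp(x,1)$ meets $\c$ evenly. One then argues by a "propagation" or dimension-counting argument inside the support of $\c$ — something like: restrict $\c$ to cosets of a well-chosen subgroup, or iterate the sphere-evenness condition to force $\c$ (or its image under repeated folding) to have $\geq 2^{\frac{n-1}{2}}$ elements; structurally this resembles an MRRW/linear-programming or a recursive halving argument on the hypercube with the extra diagonal generator. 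For the upper bound $D \leq 2^{\frac{n-1}{2}}$ one exhibits an explicit word: a natural candidate is the indicator of a suitable affine subspace or "subcube slice" of $\F_2^n$ of dimension $\frac{n-1}{2}$ — one checks it is orthogonal to all radius-$1$ spheres (using that $n$ is odd so the diagonal generator $(1\ldots1)$ interacts correctly with the slice's parity) and that it is not a formal sum of spheres, i.e. not in $\pi A$. The main obstacle, and where I expect the bulk of the work, is the lower bound $D \geq 2^{\frac{n-1}{2}}$: the generic quadratic bound from Theorem \ref{thm:min_dist} is far too weak here, so one needs a construction-specific argument exploiting the full algebraic structure of $\mathrm{Ann}(\pi)/\pi A$ — likely identifying this quotient explicitly as a module and tracking how the Hamming weight is forced up by the ideal structure of $A$.
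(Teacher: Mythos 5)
Your setup is correct and matches the paper's own framework (Lemmas \ref{lem:mult_pi} and \ref{lem:get_nilp}: $\C(S_n')=\im\M(S_n')$, $\C(S_n')^\perp=\Ker\M(S_n')$, self-orthogonality from $\pi_{S_n'}^2=0$), but both substantive computations are left as hopes rather than arguments, and each hides a genuine gap. For the dimension, the elimination trick of Proposition \ref{prop:Rk_An} does not adapt: there the relation $X_1+\cdots+X_n=0$ lets one eliminate a single variable and recognize the cokernel as the group algebra $\F_2[\F_2^{n-1}]$, whereas here the relation involves the top monomial $X_1\cdots X_n$, and the answer $\dim\Ker\M(S_n')=2^{n-1}+2^{\frac{n-1}{2}}$ is not a power of $2$, so the cokernel cannot be a group algebra and no such clean elimination exists. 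Some other mechanism must produce the $2^{\frac{n-1}{2}}$ correction term, and your sketch does not identify it. (The character-diagonalization alternative you mention fails in characteristic $2$ for exactly the reason you concede.) The paper instead proceeds by a completely different, recursive route: Lemma \ref{lem:recursion_matrix} expresses $\M(S_{n+1}')$ in $2\times 2$ block form in terms of $\M(S_n')$ and the antidiagonal permutation $J$, Lemma \ref{lem:caracterisation} characterizes $\Ker\M(S_{n+2}')$ via four coupled conditions on quarters $(\c_1,\c_2,\c_3,\c_4)$, and an explicit isomorphism $\Psi$ gives the recursion $\dim\Ker\M(S_{n+2}')=2\dim\Ker\M(S_n')+2^n$, which yields the formula by induction.

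For the minimum distance, you correctly identify the lower bound $D\geq 2^{\frac{n-1}{2}}$ as the crux, but you offer no actual argument for it --- only a list of candidate strategies (``propagation,'' ``recursive halving,'' ``MRRW-type'') none of which is developed, and the quotient module $\mathrm{Ann}(\pi)/\pi A$ whose weight structure you propose to ``track'' is precisely the object whose analysis constitutes the theorem. The paper's proof is again recursive: Lemma \ref{lem:representative} reduces every class of $\Ker\M(S_{n+2}')/\im\M(S_{n+2}')$ to one of two normal forms, and in each case a triangle-inequality argument on the four quarters (using that $J$ is a Hamming isometry commuting with $\M(S_n')$) shows the weight is at least twice the minimum distance at level $n$; the matching upper bound comes from the explicit word $(0,0,\d_2,\d_1)$ built from two minimum-weight representatives at level $n$. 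Your proposed extremal word (the indicator of an $\frac{n-1}{2}$-dimensional subspace such as $\langle e_1+e_2,\dots,e_{n-2}+e_{n-1}\rangle$) does appear to lie in $\Ker\M(S_n')$ and is a reasonable candidate for the upper bound, but you verify neither its orthogonality to all spheres nor its non-membership in $\im\M(S_n')$. As it stands the proposal establishes $N=2^n$ and the general strategy, but neither $K$ nor $D$.
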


The matrix  $\G (\F_2^n, S_n \cup \{(11\ldots 1)\})$ has row weight $n+1$ that is logarithmic in the length of the quantum code. This proves the LDPC character of the quantum code.

Actually, the parameters of this family of Quantum codes can be
slightly improved since, from Remark \ref{rem:bipart}, the Cayley graph $\G (\F_2^n, S_n')$ is bipartite. Considering only the vertices corresponding to small words of even weight, one obtains another Quantum code whose length is divided by $2$ and which has the same rate and the same minimum distance. That is, we get a Quantum code with parameters
$$
[[N = 2^{n-1}, K = 2^{\frac{n-1}{2}}, D = 2^{\frac{n-1}{2}}]].
$$

\subsection{Big codes and matrices}

In the previous sections, most of the proofs were combinatorial and involved set of vertices of the Cayley graph. For such a task the terminology of big codes as sets of sets of vertices of the Cayley graph was adapted.

In what follows, we will reason on the matrices $\M (S)$. For this reason we will use preferentially the terminology of kernels and images of the matrix instead those of the big codes $\C (S)$. Notice that, if $\# S$ is even, then $\M (S)$ is self--orthogonal. It is also always symmetric and hence
\begin{eqnarray}
  \label{eq:codes_kernels}
  \Ker \M (S) & = & \C (S)^{\bot}\\
  \im \M (S) & = & \C (S).
\end{eqnarray}

\subsection{The  matrices $\M (\F_2^{n}, S_{n}')$}
Sorting the vectors of $\F_2^{3}$ in the lexicographic order, one obtains

\begin{equation}\label{eq:MS3}
\M (S_3') = 
\left(
\begin{array}{cccccccc}
0 & 1 & 1 & 0 & 1 & 0 & 0 & 1\\
1 & 0 & 0 & 1 & 0 & 1 & 1 & 0\\
1 & 0 & 0 & 1 & 0 & 1 & 1 & 0\\
0 & 1 & 1 & 0 & 1 & 0 & 0 & 1\\
1 & 0 & 0 & 1 & 0 & 1 & 1 & 0\\
0 & 1 & 1 & 0 & 1 & 0 & 0 & 1\\
0 & 1 & 1 & 0 & 1 & 0 & 0 & 1\\
1 & 0 & 0 & 1 & 0 & 1 & 1 & 0
\end{array}
\right). 
\end{equation}

\begin{rem}\label{rem:rankMS3}
This matrix has rank 2. Therefore, the associated classical code has dimension 6 and the Quantum code encodes 4 qubits.  
\end{rem}

\begin{nota}
For every positive integer $s$, let $J_{s}$ be the $s \times s$ matrix defined by
$$
J_{s}:=\left(
  \begin{array}{ccccc}
     &  &  &  & 1 \\
     & (0) &  & 1 &  \\
     & & \iddots &  & \\
     & 1 & &(0) & \\
    1 &  &  &  &  
  \end{array}
\right).
$$
To avoid heavy notation and for convenience sake, we frequently remove the index, which can be easily guessed thanks to the context, and just write $J$.
For the same reason, we frequently write $I$ for the identity matrix, without any index.
\end{nota}
The matrices $\M (S_n')$ can be built recursively in the following manner. 

\begin{lem}\label{lem:recursion_matrix}
Let $n\geq 3$ be an integer, sorting the elements of $\F_2^n$ and $\F_2^{n+1}$ in the lexicographic order, we get
\begin{equation}\label{eq:recursion}
\M (S_{n+1}') =
\left(
\begin{array}{cc}
\M(S_n')+J_{2^n} & I_{2^n}+J_{2^n}\\
I_{2^n}+J_{2^n} & \M (S_n') + J_{2^n}
\end{array}
\right).
\end{equation}
\end{lem}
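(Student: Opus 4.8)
The plan is to expand the adjacency matrix $\M(S_{n+1}')$ directly, using the way the lexicographic order on $\F_2^{n+1}$ refines the one on $\F_2^n$, and then to rewrite the resulting $2^n\times 2^n$ blocks in terms of $\M(S_n')$ and $J_{2^n}$. Write each element of $\F_2^{n+1}$ as a pair $(\epsilon,x)$ with $\epsilon\in\F_2$ the leading coordinate and $x\in\F_2^n$. In the lexicographic order the $2^{n+1}$ rows and columns of $\M(S_{n+1}')$ split into two consecutive blocks of size $2^n$, the first indexed by the vectors $(0,x)$ and the second by the vectors $(1,x)$, each listed in the lexicographic order of $\F_2^n$. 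Under this identification the generating set $S_{n+1}'=S_{n+1}\cup\{(1\cdots1)\}$ decomposes as $\{(1,0)\}\cup\{(0,e_i):1\le i\le n\}\cup\{(1,(1\cdots1))\}$, where $0$ and $(1\cdots1)$ now denote the zero and all-ones vectors of $\F_2^n$.

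Next I would read off the four blocks from the defining property $\M(S_{n+1}')_{(\epsilon,x),(\epsilon',y)}=1\iff(\epsilon+\epsilon',\,x+y)\in S_{n+1}'$. On the diagonal blocks ($\epsilon=\epsilon'$) the condition becomes $(0,x+y)\in S_{n+1}'$, i.e. $x+y\in S_n$, so each diagonal block equals the adjacency matrix $\M(\F_2^n,S_n)$ of the Hamming hypercube. On the off-diagonal blocks ($\epsilon\ne\epsilon'$) the condition becomes $(1,x+y)\in S_{n+1}'$, i.e. $x+y\in\{0,(1\cdots1)\}$, so each off-diagonal block equals $I_{2^n}$ plus the permutation matrix of the complementation map $x\mapsto x+(1\cdots1)$ on $\F_2^n$.

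The only point requiring care — and essentially the only thing to check — is that, in the lexicographic order, the complementation matrix is exactly $J_{2^n}$: the $p$-th vector of $\F_2^n$ (counting from $0$) is the base-$2$ expansion of $p$, its bitwise complement is the vector numbered $2^n-1-p$, so the matrix sending $x$ to its complement has its ones precisely on the anti-diagonal. Hence each off-diagonal block is $I_{2^n}+J_{2^n}$, and the same identification shows that the ``add $(1\cdots1)$'' permutation matrix is $J_{2^n}$; since $S_n'=S_n\cup\{(1\cdots1)\}$ this gives $\M(\F_2^n,S_n)=\M(S_n')+J_{2^n}$, which turns the diagonal blocks into $\M(S_n')+J_{2^n}$ and yields \eqref{eq:recursion}. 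There is no genuine obstacle here; the argument is a routine verification once the block decomposition induced by the lexicographic order is set up correctly. Alternatively, one can run the same computation in the group algebra $\F_2[\F_2^{n+1}]\cong\F_2[\F_2^n][Y]/(Y^2-1)$ via Lemma~\ref{lem:mult_pi}: writing $\pi_{S_{n+1}'}=\pi_{S_n}+Y(1+X_1\cdots X_n)$ and expanding $(f_0+Yf_1)\pi_{S_{n+1}'}$ with $Y^2=1$ produces the same block matrix, the inner element $X_1\cdots X_n$ acting precisely as $J_{2^n}$.
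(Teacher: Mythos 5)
Your proof is correct and follows essentially the same route as the paper's: both decompose the lexicographic order of $\F_2^{n+1}$ into the two half-blocks determined by the most significant coordinate, attribute the $I$ term to the corresponding basis vector, the off-diagonal $J$ terms to the all-ones generator via the observation that complementation reverses the lexicographic order, and identify the diagonal blocks as $\M(\F_2^n,S_n)=\M(S_n')+J_{2^n}$. The only difference is organizational (you go block by block, the paper goes generator by generator), and your group-algebra remark is a nice optional reformulation.
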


\begin{proof}
From {\it Remark} \ref{rem:oriented_symmetric}, the matrix is symmetric and hence it is sufficient to prove the result for the upper half of the rows, that is the first $2^n$ rows.
These rows correspond to vectors of $\F_2^{n+1}$ whose $(n+1)$--th coordinate is zero.
Using this lexicographic order, elements of $\F_2^{n+1}$ are in one--to--one correspondence with the integers $0, 1, \ldots , 2^{n+1}-1$, each integer corresponding to the word yielding its dyadic expansion.

\medbreak

\noindent {\it Step 0.} Before reasoning recursively, notice that for all $n$, all the anti-diagonal entries of $\M (S_n')$ are equal to $1$. Indeed, $S_n'$ contains the word $(1\ldots 1)$ and for all $x\in \F_2^{n+1}$, the word $x+(1\ldots 1)$ is the word obtained from $x$ by swapping $0$'s and $1$'s. In term of the above-described correspondence, if $(x)$ is the dyadic expansion of $p\in \{0, \ldots , 2^{n+1}-1\}$, then $x+(1\ldots 1)$ corresponds to $2^{n+1}-1-p$. This yields the terms $J$ in the top right-hand and bottom left-hand blocks of (\ref{eq:recursion}).

\medbreak

\noindent {\it Step 1.} For all $x\in \F_2^{n+1}$ whose $(n+1)$--th entry is $0$.
The corresponding integer $p$ is in $\{0, \ldots , 2^{n}-1\}$ and $x+e_{n+1}$ corresponds to $p+2^n$. This yields the term $I$ in the top right-hand block. 

\medbreak

\noindent {\it Step 2.} The top left-hand block of $\M (S_{n+1}')$ is similar to the matrix $\M (S_n')$ with only one difference, the contribution of $e_1+\cdots +e_n$ in $S_n'$ should be removed from the block of $\M (S_{n+1}')$. This explains the term $J$ in the top left-hand corner.
\end{proof}

\noindent {\bf Caution.} The matrix $\M (S_n')$ is self--orthogonal only for $\# S_n'$ even, that is for $n$ odd. However, because of this recursive approach, it is necessary to consider also the matrices $\M (S_n')$ with even $n$ which do not provide a Quantum code. This is the reason why Lemma \ref{lem:recursion_matrix} is stated for any integer $n\geq 3$ and not only for odd such integers.

\subsection{Computation of the dimension}

\begin{lem}\label{lem:Jsquare}
For all integer $s$, we have $J_s^2 = I_s$.
\end{lem}

\noindent Using the symmetries of the matrix $\M (S_n')$ we obtain the following Lemmas.

\begin{lem}\label{lem:conjugation}
For all odd integer $n\geq 3$, we have:
\begin{enumerate}[(i)]
\item\label{item:conj} $J\M (S_n')J=\M (S_n')$;
\item\label{item:ker} $\c \in \Ker \M (S_n') \Leftrightarrow J \c \in \Ker \M (S_n')$;
\item\label{item:im} $\c \in \im \M (S_n') \Leftrightarrow J \c \in \im \M (S_n')$;
\end{enumerate}
\end{lem}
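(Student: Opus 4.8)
\textbf{Proof plan for Lemma \ref{lem:conjugation}.}

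The plan is to establish the three statements in order, using that (\ref{item:conj}) is the structural fact from which (\ref{item:ker}) and (\ref{item:im}) follow formally. First I would prove (\ref{item:conj}). The matrix $J=J_{2^n}$ is the anti-diagonal permutation matrix, which (as noted in Step~0 of the proof of Lemma \ref{lem:recursion_matrix}) corresponds, under the lexicographic labelling of $\F_2^n$ by the integers $0,\dots,2^n-1$, to the Hamming isometry $x\mapsto x+(1\ldots 1)$, i.e. the complementation map sending the integer $p$ to $2^n-1-p$. Since $(1\ldots 1)\in S_n'$ and, more importantly, $(1\ldots 1)+S_n'=S_n'$ (complementation permutes $S_n'$: each $e_i$ goes to $e_1+\cdots+\widehat{e_i}+\cdots+e_n$, but wait — that is not in $S_n'$). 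So the cleaner route is: conjugating $\M(S_n')$ by the permutation matrix of a map $\phi$ yields $\M(\phi(S_n'))$ only when $\phi$ fixes $S_n'$ setwise — which complementation does \emph{not}. Instead I would invoke the previously proved fact that any Hamming isometry $\phi$ of $\F_2^n$ induces a permutation $\Phi$ of $\F_2^{2^n}$ lying in the automorphism group of $\G(S_n')$ (this is exactly the Lemma preceding Corollary \ref{cor:vertex_trans}, applied to $S=S_n'$ and to the translation $t_{(1\ldots 1)}$). An automorphism of the graph conjugates the adjacency matrix to itself; since $\Phi$ acts here as the anti-diagonal permutation $J$, this gives $J\M(S_n')J^{-1}=\M(S_n')$, and $J^{-1}=J$ by Lemma \ref{lem:Jsquare}, so $J\M(S_n')J=\M(S_n')$. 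Alternatively one can simply read (\ref{item:conj}) off the recursion \eqref{eq:recursion} by induction: conjugating the block form by $J_{2^{n+1}}=\begin{psmallmatrix}0&J_{2^n}\\ J_{2^n}&0\end{psmallmatrix}$ swaps the diagonal blocks and conjugates each by $J_{2^n}$, and an easy check using $J(I+J)J=I+J$ together with the inductive hypothesis $J\M(S_n')J=\M(S_n')$ shows the whole block matrix is preserved; the base case $n=3$ is the explicit matrix \eqref{eq:MS3}, which visibly satisfies $J_8\M(S_3')J_8=\M(S_3')$.

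Granting (\ref{item:conj}), statements (\ref{item:ker}) and (\ref{item:im}) are formal. For (\ref{item:ker}): recall (Caution after Lemma \ref{lem:mult_pi}) that matrices act on row vectors, so $\Ker\M(S_n')=\{\c : \c\,\M(S_n')=0\}$. If $\c\,\M(S_n')=0$, then $(\c J)\M(S_n') = \c (J\M(S_n')) = \c(\M(S_n')J) = (\c\,\M(S_n'))J = 0$, where the middle equality is (\ref{item:conj}) rewritten as $J\M(S_n')=\M(S_n')J$ (multiply $J\M(S_n')J=\M(S_n')$ on the right by $J=J^{-1}$). Hence $J\c\in\Ker\M(S_n')$, and since $J^2=I$ the converse follows by applying the same argument to $J\c$. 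For (\ref{item:im}): if $\c=\d\,\M(S_n')$ for some $\d$, then $\c J = \d\,\M(S_n')J = \d\, J\,\M(S_n') = (\d J)\M(S_n')$, so $\c J\in\im\M(S_n')$; but note the statement is about $J\c$ not $\c J$ — since $J$ is symmetric ($J^T=J$) and $\M(S_n')$ is symmetric, the image (row space) equals the column space, and left/right multiplication by the symmetric matrix $J$ act compatibly, so one gets $J\c\in\im\M(S_n')$ as well, and again $J^2=I$ gives the converse. (Concretely: $\im\M = \{\M\d^T : \d\} $ as a column space equals $\{\d\M:\d\}$ as a row space by symmetry, and $J\M\d^T = \M J\d^T$ shows $J$ preserves this set.)

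The main obstacle is really just bookkeeping about conventions — whether $J$ multiplies on the left or the right and whether we view $\Ker$ and $\im$ as row-spaces (the paper's convention, per the Caution) — and making sure the commutation $J\M(S_n')=\M(S_n')J$ is used on the correct side in each case; once (\ref{item:conj}) is in hand together with $J^2=I$ and the symmetry of both $J$ and $\M(S_n')$, parts (\ref{item:ker}) and (\ref{item:im}) are immediate. The only genuine content is (\ref{item:conj}), and I would present the induction via \eqref{eq:recursion} as the primary argument since it is self-contained and matches the recursive spirit of this section, mentioning the graph-automorphism viewpoint as the conceptual explanation.
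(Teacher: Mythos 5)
Your proof is correct and follows essentially the same route as the paper: the paper's proof of (\ref{item:conj}) is exactly your graph-automorphism argument (conjugation by $J$ amounts to relabelling the vertices by the Hamming isometry $x\mapsto x+(1\ldots 1)$, which is an automorphism of $\G(S_n')$ by the lemma preceding Corollary \ref{cor:vertex_trans}), and the paper also offers your induction via (\ref{eq:recursion}) as the alternative route. Parts (\ref{item:ker}) and (\ref{item:im}) are dismissed in the paper as straightforward consequences of (\ref{item:conj}); your explicit commutation argument $J\M(S_n')=\M(S_n')J$ together with $J^2=I$ and the symmetry of $\M(S_n')$ is precisely what makes them so.
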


\begin{proof}
From Lemma \ref{lem:Jsquare}, the left-hand term of (\ref{item:conj}) is the conjugation of $\M (S_n')$ by $J$. Conjugating a matrix by $J$ is nothing but changing the basis by reversing the sorting of its elements, i.e. reversing the sorting of the rows and the columns. In terms of small words it corresponds to apply the permutation of $\F_2^n$ given by the affine automorphism $\phi : x\longmapsto x+ (11\ldots 1)$, which is a Hamming--isometry.
From Corollary \ref{cor:vertex_trans}, the permutation $\phi$ is a graph automorphism of $\G (S_m')$
Thus considering the elements $w_1, \ldots , w_{2^n}$ of $\F_2^n$ sorted by the lexicographic order or sorted as $\phi (w_1), \ldots, \phi (w_{2^n})$ provides the same adjacency matrix.
Another way to prove the assertion is to look at (\ref{eq:MS3}) and observe that it is true for $n=3$. Then to prove the result by induction on $n$ using Lemma \ref{lem:recursion_matrix}.

Assertions (\ref{item:ker}) and (\ref{item:im}) are straightforward consequences of (\ref{item:conj}).
\end{proof}

\begin{lem}\label{lem:caracterisation}
Let $\c = (\c_1, \c_2, \c_3, \c_4) \in \F_2^{2^{n+2}}$ where $\c_i$ are vectors of $\F_2^{2^{n}}$.
Then, we have $\c \in \Ker \M(S_{n+2}')$ if and only if:
\begin{equation}\label{eq:characterisation}
\left\{
\begin{array}{llll}
\c_4 = \c_1 + \d_1 \text{ where } \d_1 \in \Ker \M (S_{n}')\\
\c_3 = \c_2 + \d_2 \text{ where } \d_2 \in \Ker \M (S_{n}')\\
\M (S_n')\c_1 = \d_2 + J\d_1\\
\M (S_n')\c_2 = \d_1 + J\d_2
\end{array}
\right.
.
\end{equation}
\end{lem}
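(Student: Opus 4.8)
The plan is to use the recursion formula from Lemma~\ref{lem:recursion_matrix}, applied twice to go from $\M(S_n')$ to $\M(S_{n+2}')$, and then simply unpack what the equation $\M(S_{n+2}')\c = 0$ says block by block. Writing $\c = (\c_1,\c_2,\c_3,\c_4)$ with each $\c_i \in \F_2^{2^n}$, I first need an explicit block description of $\M(S_{n+2}')$ as a $4\times 4$ array of $2^n\times 2^n$ blocks. Applying \eqref{eq:recursion} once gives $\M(S_{n+2}')$ in terms of $2\times 2$ blocks involving $\M(S_{n+1}')$, $I_{2^{n+1}}$ and $J_{2^{n+1}}$; applying it a second time expands $\M(S_{n+1}')$, and I also need to express $J_{2^{n+1}}$ and $J_{2^{n+2}}$ in terms of $2^n\times 2^n$ blocks. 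The key observation here is that $J_{2^{n+1}}$ is the anti-diagonal $\begin{pmatrix} 0 & J_{2^n} \\ J_{2^n} & 0\end{pmatrix}$ and similarly $J_{2^{n+2}}$ has $J_{2^n}$ on the anti-diagonal of its $4\times 4$ block form. Substituting everything and using $J^2 = I$ (Lemma~\ref{lem:Jsquare}), the matrix $\M(S_{n+2}')$ should come out with a recognizable structure: the anti-diagonal blocks being $I + $ (something) and the near-diagonal blocks being $\M(S_n') + $ (something), arranged so that the four scalar equations $\M(S_{n+2}')\c = 0$ decouple in the stated way.

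Once the block form is in hand, the proof is pure linear algebra over $\F_2$. The four rows of blocks, applied to $(\c_1,\c_2,\c_3,\c_4)^T$, give four vector equations. I expect two of them to be of the form (roughly) ``$\c_1 + \c_4 + (\text{terms in }\M(S_n')\c_2,\M(S_n')\c_3) = 0$'' and the other two symmetric under swapping. The trick is then to introduce $\d_1 := \c_1 + \c_4$ and $\d_2 := \c_2 + \c_3$ and show that the system is equivalent to: $\d_1,\d_2 \in \Ker\M(S_n')$ together with $\M(S_n')\c_1 = \d_2 + J\d_1$ and $\M(S_n')\c_2 = \d_1 + J\d_2$. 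One direction (a solution of \eqref{eq:characterisation} gives $\c \in \Ker\M(S_{n+2}')$) is a direct substitution check. For the converse, from the ``mixed'' equations I solve for $\M(S_n')\c_1$ and $\M(S_n')\c_2$; adding suitable pairs of the original four equations and cancelling using $J^2 = I$ should force $\M(S_n')\d_1 = \M(S_n')\d_2 = 0$, i.e. $\d_1,\d_2 \in \Ker\M(S_n')$, which is exactly the first two lines of \eqref{eq:characterisation} after rewriting $\c_4 = \c_1 + \d_1$, $\c_3 = \c_2 + \d_2$.

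The main obstacle I anticipate is purely bookkeeping: getting the two-step block expansion of $\M(S_{n+2}')$ exactly right, in particular tracking how the $J_{2^{n+1}}$ and $J_{2^{n+2}}$ terms from the two applications of \eqref{eq:recursion} combine. There is real potential for sign-free-but-still-error-prone cancellation here, because several $J$'s and $I$'s get added in the $\F_2$ arithmetic and it is easy to drop or double-count a term. A useful sanity check is to verify the resulting $4\times 4$ block matrix is symmetric (it must be, by Remark~\ref{rem:oriented_symmetric}) and that conjugation by $J_{2^{n+2}}$ fixes it, consistent with Lemma~\ref{lem:conjugation}(\ref{item:conj}); another is to test the final characterization against the explicit $\M(S_3')$ in \eqref{eq:MS3} to pin down $\M(S_5')$ and check a small kernel vector by hand. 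Once the block form is correct, the equivalence of the two systems is routine $\F_2$-linear algebra with no conceptual difficulty, so essentially all the care goes into Step~0 of the computation.
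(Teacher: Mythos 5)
Your proposal follows the paper's proof essentially verbatim: the paper likewise applies the recursion of Lemma~\ref{lem:recursion_matrix} twice (using $J_{2^{n+1}}=\bigl(\begin{smallmatrix}0&J_{2^n}\\ J_{2^n}&0\end{smallmatrix}\bigr)$) to obtain the $4\times 4$ block form \eqref{eq:2recurs}, reads off the four block equations, and sets $\d_1=\c_1+\c_4$, $\d_2=\c_2+\c_3$, with rows $1,4$ and $2,3$ pairing up to force $\d_1,\d_2\in\Ker\M(S_n')$. Your guess at the resulting block pattern is slightly off (the diagonal blocks are $\M(S_n')+J$, the antidiagonal blocks are $J$, and all remaining blocks are $I$), but the computation you outline produces the correct matrix and the rest of the argument goes through exactly as you describe.
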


\begin{proof}
By the recursion formula of Lemma \ref{lem:recursion_matrix}, we have:
\begin{equation}\label{eq:2recurs}
\M (S_{n+2}') =
\left(
\begin{array}{cc|cc}
\M (S_{n}')+J & I & I & J\\
I & \M (S_{n}')+J & J & I\\
 & & & \\
\hline
 & & & \\
I & J & \M (S_{n}')+J & I\\
J & I & I & \M (S_{n}')+J
\end{array}
\right).
\end{equation}

This gives a characterisation of the vectors of the kernel $\M (S_{n+2}')$ in function of $\M (S_{n}')$. We have $(\c_1, \c_2, \c_3, \c_4) \in \Ker \M (S_{n+2}')$ if and only if
\begin{eqnarray*}
&
\Leftrightarrow
&
\left\{
\begin{array}{llll}
\M (S_{n}')\c_1 = (\c_2+\c_3) + J(\c_1+\c_4)\\
\M (S_{n}')\c_2 = (\c_1+\c_4) + J(\c_2+\c_3)\\
\M (S_{n}')\c_3 = \M (S_{n}')\c_2\\
\M (S_{n}')\c_4 = \M (S_{n}')\c_1
\end{array}
\right.\\
&
\Leftrightarrow
&
\left\{
\begin{array}{llll}
\c_4 = \c_1 + \d_1 \text{ where } \d_1 \in \Ker \M (S_{n}')\\
\c_3 = \c_2 + \d_2 \text{ where } \d_2 \in \Ker \M (S_{n}')\\
\M (S_{n}')\c_1 = \d_2 + J\d_1\\
\M (S_{n}')\c_2 = \d_1 + J\d_2
\end{array}
\right.
\end{eqnarray*}
\end{proof}

\begin{prop}
For $n$ odd, we have:
$\dim \Ker \M (S_n') = 2^{n-1} + 2^{\frac{n-1}{2}}.$
\end{prop}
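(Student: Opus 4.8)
The plan is to prove the dimension formula by induction on odd $n$, using the recursive description of $\Ker \M(S_{n+2}')$ provided by Lemma~\ref{lem:caracterisation}. The base case $n=3$ is handled by the explicit matrix \eqref{eq:MS3}: since $\Rk \M(S_3')=2$ (Remark~\ref{rem:rankMS3}), we have $\dim \Ker \M(S_3')=8-2=6=2^{2}+2^{1}$, which matches $2^{n-1}+2^{\frac{n-1}{2}}$ for $n=3$. For the inductive step, I would assume $\dim \Ker \M(S_n') = 2^{n-1}+2^{\frac{n-1}{2}}$ and count the number of solutions $(\c_1,\c_2,\c_3,\c_4)\in\F_2^{2^{n+2}}$ of the system \eqref{eq:characterisation}. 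Once $\c_1$ and $\c_2$ are chosen, the vectors $\c_3$ and $\c_4$ are determined up to adding elements $\d_2,\d_1\in\Ker\M(S_n')$ respectively; but $\d_1$ and $\d_2$ are not free, since the last two equations $\M(S_n')\c_1=\d_2+J\d_1$ and $\M(S_n')\c_2=\d_1+J\d_2$ constrain them. So the count is: number of pairs $(\c_1,\c_2)$ for which there exist $\d_1,\d_2\in\Ker\M(S_n')$ satisfying those two equations, multiplied by the number of admissible $(\d_1,\d_2)$ for each such pair.

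The key linear-algebra observation is that the map $(\d_1,\d_2)\mapsto(\d_2+J\d_1,\ \d_1+J\d_2)$ is an endomorphism of $\Ker\M(S_n')\times\Ker\M(S_n')$ (this uses Lemma~\ref{lem:conjugation}\eqref{item:ker}, namely that $J$ preserves $\Ker\M(S_n')$). Writing this endomorphism in block form $\begin{pmatrix} J & I \\ I & J\end{pmatrix}$ and using $J^2=I$ (Lemma~\ref{lem:Jsquare}), one computes its kernel: $\d_2=J\d_1$ and $\d_1=J\d_2=J^2\d_1=\d_1$, so the kernel is $\{(\d_1,J\d_1):\d_1\in\Ker\M(S_n')\}$, of dimension $\dim\Ker\M(S_n')$. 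Hence the image of this endomorphism has dimension exactly $\dim\Ker\M(S_n')$ as well. Therefore: for each pair $(\c_1,\c_2)$ the number of admissible $(\d_1,\d_2)$ is either $0$ or $2^{\dim\Ker\M(S_n')-\dim\Ker\M(S_n')}=\dots$; more precisely, the fibre (when nonempty) has size equal to the kernel size $2^{\dim\Ker\M(S_n')}$, and the condition for nonemptiness is that $(\M(S_n')\c_1,\M(S_n')\c_2)$ lies in the image of the block endomorphism, which cuts out a subspace of pairs $(\c_1,\c_2)$ of some computable codimension.

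The bookkeeping then runs as follows. Let $k_n:=\dim\Ker\M(S_n')$ and $r_n:=2^n-k_n=\Rk\M(S_n')$. The number of solutions $(\c_1,\c_2,\c_3,\c_4)$ equals $2^{2k_n}$ (the freedom in $\d_1$ coming with $\c_4$, and $\d_2$ with $\c_3$, once we have fixed things correctly) times the number of $(\c_1,\c_2)$ with $(\M(S_n')\c_1,\M(S_n')\c_2)$ in the prescribed image — here one must check that this image, viewed inside $(\im\M(S_n'))^2$, is all of a subspace whose preimage under $(\c_1,\c_2)\mapsto(\M(S_n')\c_1,\M(S_n')\c_2)$ has dimension $2\cdot 2^n - 2r_n + (\text{dim of that image}) = 2k_n + k_n$. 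Collecting exponents gives $k_{n+2} = 2k_n + (2^n - r_n) + \dots$; after substituting $r_n=2^n-k_n$ and simplifying, the recurrence should reduce to $k_{n+2}=4k_n - 2^{n}$ or a similar closed form, and one verifies directly that $k_n=2^{n-1}+2^{\frac{n-1}{2}}$ satisfies it with the base value at $n=3$. The main obstacle is precisely pinning down the dimension of the image of the block map $\begin{pmatrix}J&I\\I&J\end{pmatrix}$ \emph{as a subspace of $(\im\M(S_n'))^2$} and confirming that every pair $(\M(S_n')\c_1,\M(S_n')\c_2)$ ranging over $(\im\M(S_n'))^2$ hits that image in the right codimension — in other words, showing the two constraints ``$\M(S_n')\c_i$ = something in $\Ker\M(S_n')$'' are genuinely independent and compatible. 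This is where Lemma~\ref{lem:conjugation}\eqref{item:im} (that $J$ also preserves $\im\M(S_n')$) and self-orthogonality ($\im\M(S_n')\subseteq\Ker\M(S_n')$, so $\M(S_n')\c_i$ automatically lands in $\Ker\M(S_n')$) will be the crucial inputs that make the count go through cleanly.
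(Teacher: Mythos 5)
Your overall strategy coincides with the paper's (induction on odd $n$ via Lemma~\ref{lem:caracterisation}), and your structural observations are sound: the constraint map $B\colon(\d_1,\d_2)\mapsto(\d_2+J\d_1,\,\d_1+J\d_2)$ does preserve $\Ker\M(S_n')\times\Ker\M(S_n')$, its kernel is $\{(\d,J\d)\}$ of dimension $k_n:=\dim\Ker\M(S_n')$, and each nonempty fibre has size $2^{k_n}$. But the step you yourself flag as ``the main obstacle'' is precisely the step that is missing, and the numbers you pencil in for it are wrong. The image of $B$ is $\{(\a,J\a):\a\in\Ker\M(S_n')\}$; since $(\M(S_n')\c_1,\M(S_n')\c_2)$ automatically lies in $(\im\M(S_n'))^2$, the compatibility condition is membership in $\im B\cap(\im\M(S_n'))^2=\{(\a,J\a):\a\in\im\M(S_n')\}$, a space of dimension $r_n:=\Rk\M(S_n')$, \emph{not} $k_n$. (Concretely, using $J\M(S_n')J=\M(S_n')$ the condition collapses to the single equation $\M(S_n')(\c_2+J\c_1)=0$.) Hence the valid pairs $(\c_1,\c_2)$ form a space of dimension $2k_n+r_n=2^n+k_n$, not the $2k_n+k_n$ you write; multiplying by the fibre size $2^{k_n}$ (not the $2^{2k_n}$ of your bookkeeping paragraph, which contradicts your own fibre computation two sentences earlier) gives $k_{n+2}=2^n+2k_n$. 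Your guessed recurrence $k_{n+2}=4k_n-2^n$ is false: for $n=3$ it gives $16$, whereas $k_5=2^4+2^2=20=2\cdot 6+2^3$. The closed form $k_n=2^{n-1}+2^{(n-1)/2}$ does satisfy $k_{n+2}=2k_n+2^n$, so once the codimension count is corrected the induction closes.

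For comparison, the paper sidesteps the intersection-of-images computation by working modulo $\im\M(S_n')$: it sets $\varphi(\d_1,\d_2)=\d_1+J\d_2 \bmod \im\M(S_n')$, shows via Lemma~\ref{lem:conjugation} that admissibility of $(\d_1,\d_2)$ is exactly $(\d_1,\d_2)\in\Ker\varphi$, builds an explicit isomorphism $\Ker\varphi\times(\Ker\M(S_n'))^2\cong\Ker\M(S_{n+2}')$ using a linear section of $\M(S_n')$, and obtains $\dim\Ker\varphi=2k_n-(k_n-r_n)=2^n$ from the surjectivity of $\varphi$. Either route works; yours still needs the $r_n$-versus-$k_n$ point settled and the final arithmetic redone.
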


\begin{proof}
We prove the result by induction on $n\geq 3$ odd. From Remark \ref{rem:rankMS3}, the case of $\M (S_3')$ is done. Assume now that the result holds for some $n\geq 3$.

If $\c \in \Ker \M (S_{n+2}')$, then the characterisation of Lemma \ref{lem:caracterisation} provides $\d_1, \d_2 \in \Ker \M (S_n')$ such that $\d_1+J\d_2$ and $\d_2+J\d_1$ are in the image of the matrix $\M (S_n')$.
We will show that given such a pair $\d_1, \d_2$ together with a couple of elements of $\Ker \M (S_n')$ one can construct any element of $\Ker (S_{n+2}')$.
First, to study these couples $(\d_1, \d_2)$, let us introduce the map
$$
\varphi :
\left\{
  \begin{array}{ccc}
     \Ker \M (S_n') \times \Ker \M (S_n') & \longrightarrow & \Ker \M (S_n') / \im \M (S_n')\\
(\d_1, \d_2) & \longmapsto & \d_1+J\d_2
  \end{array}
\right.
$$
From Lemma \ref{lem:conjugation}, $\d_1+J\d_2$ and $\d_2+J\d_1$ are both in $\im \M (S_n')$ if and only if $(\d_1, \d_2)$ is in the kernel of $\varphi$.

Given such a couple $(\d_1, \d_2)$, we can construct a codeword in $\Ker \M (S_{n+2}')$ by choosing arbitrary pre-images of $\d_1+J\d_2$ and $\d_2+J\d_1$ for $\c_1$ and $\c_2$. From this, one can construct $\Ker \M (S_{n+2}')$ from $\Ker \varphi$ and $\Ker \M (S_n')$.
Let us choose an arbitrary linear section $L$ of the map $\F_2^{2^n} \rightarrow \im \M (S_n')$ defined by the matrix  $\M (S_n')$. That is, $L$ is a linear map $L: \im \M (S_n') \rightarrow \F_2^{2^{n}}$ such that $\M (S_n')(L(\a)) = \a$ for all $\a \in \F_2^{2^n}$.
Let us introduce the map
$$
\Psi: \left\{
  \begin{array}{ccc}
    \Ker \varphi \times (\Ker \M (S_n'))^2 & \longrightarrow & \Ker \M (S_{n+2}')\\
 & & \\
(\d_1, \d_2, \s_1, \s_2)
& \longmapsto & 
\left(
\begin{array}{llll}
\c_1 = L(\d_2+J\d_1)+\s_1\\
\c_2 = L(\d_1+J\d_2)+\s_2\\
\c_3 = \c_2+\d_2\\
\c_4 = \c_1+\d_1
\end{array}
\right)
  \end{array}
\right.
$$
This map is injective since $\Psi(\d, \s) = \Psi(\d', \s')$ implies $\c_1+\c_4 = \c_1'+\c_4'$ and $\c_2+\c_3=\c_2'+\c_3'$, which entails $\d_1 = \d_1'$ and $\d_2 = \d_2'$. Then $(\c_1, \c_2)=(\c_1',\c_2')$ yields $(\s_1, \s_2) = (\s_1', \s_2')$.
Now, let us show that $\Psi$ is surjective. Let $(\c_1, \c_2, \c_3, \c_4)\in \Ker \M (S_{n+2}')$. From the characterisation (\ref{eq:characterisation}), one gets $\d_1$ and $\d_2$ and $\M (S_n')\c_1 =  \d_2+J\d_1$ means that $\c_1$ is congruent to $L(\d_2 + J\d_1)$ modulo $\Ker \M (S_n')$, which yields $\s_1$. One gets $\s_2$ by the very same manner.

We proved that $\Psi$ is an isomorphism.  Thus,
$$ \dim \Ker \M (S_{n+2}') = 2\dim \Ker \M (S_n') +\dim \Ker \varphi.$$
By the rank-nullity theorem, we get $\dim \Ker \varphi = 2^{n}$, since $\varphi$ is surjective. Finally we have:
$$
\dim \Ker \M (S_{n+2}') = 2\dim \Ker \M (S_n') + 2^{n}= 2^{n+1}+2^{\frac{n+1}{2}}.
$$
\end{proof}

We know that that the number of encoded qubits is $N-2\Rk \M (S_n')$. From the above proposition, we deduce the dimension of the Quantum code.

\subsection{Computation of the distance}
To compute the minimum distance of the Quantum code, we examine the weight of the vectors of $\Ker \M (S_n') \backslash \Ker \M (S_n')^\perp$. That is the set $\Ker \M (S_n') \backslash \im \M (S_n')$.

\begin{lem}\label{lem:representative}
Every word $\c$ of $\Ker \M (S_{n+2}')/\im \M (S_{n+2}')$ satisfies one of the following assertions.
\begin{enumerate}[(i)]
\item $\c$ is of the form $(\c_1, \c_2, \c_2+\d_2, \c_1+\d_1)$ with $\d_1, \d_2 \notin \im \M (S_n')$;
\item  $\c$ has a representative modulo $\im \M (S_n')$ of the form $(\c_1, 0, 0, \c_1)$ with $\c_1 \in \Ker \M (S_n')$.
\end{enumerate}
\end{lem}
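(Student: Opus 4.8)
The plan is to use the characterisation of $\Ker \M (S_{n+2}')$ from Lemma \ref{lem:caracterisation} together with the freedom to add elements of $\im \M (S_{n+2}')$, i.e.\ to work modulo $\im \M (S_{n+2}')$. Start with an arbitrary $\c = (\c_1, \c_2, \c_3, \c_4) \in \Ker \M (S_{n+2}')$. By Lemma \ref{lem:caracterisation} there are $\d_1, \d_2 \in \Ker \M (S_n')$ with $\c_4 = \c_1 + \d_1$, $\c_3 = \c_2 + \d_2$, and $\M (S_n')\c_1 = \d_2 + J\d_1$, $\M (S_n')\c_2 = \d_1 + J\d_2$. The dichotomy will be according to whether at least one of $\d_1, \d_2$ lies outside $\im \M (S_n')$, or both lie inside. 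In the first case there is nothing to modify: the word already has the shape in assertion (i) (note that if, say, $\d_1 \in \im\M(S_n')$ but $\d_2 \notin$, one can still present it this way after possibly swapping roles via the $J$-symmetry of Lemma~\ref{lem:conjugation}, or simply observe that the stated form only requires the $\d_i$ to be the ones produced by the characterisation — I would check the precise quantifier in (i) and phrase accordingly). So the real content is the second case.

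So suppose $\d_1, \d_2 \in \im \M (S_n')$. The idea is to kill $\c_2$, $\c_3$, and the ``$\d$'' parts by subtracting a suitable element of $\im \M (S_{n+2}')$, arriving at a representative $(\c_1', 0, 0, \c_1')$. To do this I would exhibit, for each column-block of the recursion matrix \eqref{eq:2recurs}, which big words in $\im \M (S_{n+2}')$ are available: applying $\M (S_{n+2}')$ to vectors of the form $(0,\b,0,0)$, $(0,0,\b,0)$, etc., and to $(\b,0,0,\b)$-type inputs, produces elements of the image that can be used to zero out the middle two blocks. Concretely, since $\d_2 + J\d_1 = \M(S_n')\c_1 \in \im\M(S_n')$ and $\d_1,\d_2\in\im\M(S_n')$ too, one should be able to adjust $\c_1,\c_2$ so that first $\d_1 = \d_2 = 0$ (using that a change $\c_i \mapsto \c_i + \e_i$ with $\e_i\in\Ker\M(S_n')$ changes $\d_1,\d_2$ but the resulting word differs from the original by an element of $\im\M(S_{n+2}')$ — this needs to be verified by a direct computation with \eqref{eq:2recurs}), reducing to $\c = (\c_1, \c_2, \c_2, \c_1)$ with $\M(S_n')\c_1 = \M(S_n')\c_2 = 0$, i.e.\ $\c_1, \c_2 \in \Ker\M(S_n')$. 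Then add the image of a well-chosen vector to eliminate $\c_2$, landing on $(\c_1, 0, 0, \c_1)$.

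The bookkeeping of which coset representatives are reachable is where I expect the main obstacle to lie: one must be careful that the "corrections" being subtracted genuinely lie in $\im \M (S_{n+2}')$ and not merely in $\Ker \M (S_{n+2}')$, and that after doing several such corrections in sequence the residual word still satisfies the Lemma~\ref{lem:caracterisation} constraints. The clean way to organise this is probably to compute $\im \M (S_{n+2}')$ block-by-block: determine $\M(S_{n+2}')(\a_1,\a_2,\a_3,\a_4)^T$ from \eqref{eq:2recurs} as a $4$-tuple of expressions in $\M(S_n')\a_i$, $\a_i$, $J\a_i$, and then solve the small linear system that says "choose $\a$ so that the correction has the desired effect on the four blocks." The identities $J^2 = I$ (Lemma~\ref{lem:Jsquare}) and $J\M(S_n')J = \M(S_n')$ (Lemma~\ref{lem:conjugation}(i)) will be used repeatedly to simplify these. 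Once the image is understood explicitly, case (ii) reduces to elementary linear algebra over $\F_2$, and the proof concludes by checking that the representative $(\c_1,0,0,\c_1)$ indeed requires $\c_1 \in \Ker\M(S_n')$, which is immediate from the first and last block equations of \eqref{eq:2recurs} applied with $\c_2 = \c_3 = 0$.
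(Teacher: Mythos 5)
Your overall strategy is the same as the paper's: apply Lemma \ref{lem:caracterisation}, split according to whether the $\d_i$ lie in $\im \M (S_n')$, and in the second case subtract explicit elements of $\im \M (S_{n+2}')$, computed block-by-block from \eqref{eq:2recurs}, to reach a representative $(\c_1, 0, 0, \c_1)$. The computation you defer is exactly what the paper carries out, and it does work: writing $\d_i = \M (S_n')\b_i$, the choice $\a_1=\a_2=0$, $\a_3=\b_2$, $\a_4=\b_1$ in the block description \eqref{eq:description_im} of the image yields a correction that kills $\d_1$ and $\d_2$, after which $\a_2=J\c_2$, $\a_1=\a_3=\a_4=0$ kills the two middle blocks; the identities $J^2=I$ and $J\M(S_n')=\M(S_n')J$ are used exactly where you predict, and the final membership $\c_1\in\Ker\M(S_n')$ follows as you say.

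The one genuine gap is the mixed case, which you flag but do not resolve. Assertion (i) requires \emph{both} $\d_1,\d_2\notin\im\M(S_n')$ (and note that $\d_1=\c_1+\c_4$, $\d_2=\c_2+\c_3$ are determined by $\c$, so there is no freedom in "the precise quantifier"), while your reduction to $(\c_1,0,0,\c_1)$ needs both of them \emph{in} the image. So the dichotomy "at least one outside / both inside" leaves unhandled the case where exactly one $\d_i$ lies in $\im\M(S_n')$, and neither swapping roles via the $J$-symmetry nor rereading the statement disposes of it. The resolution is a one-line consequence of the characterisation you already invoked: from $\M(S_n')\c_1=\d_2+J\d_1$ one gets $\d_2=\M(S_n')\c_1+J\d_1$, and since $J$ preserves $\im\M(S_n')$ by Lemma \ref{lem:conjugation}(\ref{item:im}), $\d_1\in\im\M(S_n')$ forces $\d_2\in\im\M(S_n')$, and symmetrically. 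Hence the mixed case cannot occur and your two cases are exhaustive. With that observation inserted and the two corrections computed explicitly, your argument is the paper's proof.
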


\begin{proof}
Let $\c$ be a vector of $\Ker \M (S_{n+2}')$. 
From Lemma \ref{lem:caracterisation}, $\c$ is of the form $(\c_1, \c_2, \c_2+\d_2, \c_1+\d_1)$ where $\d_1, \d_2 \in \Ker \M (S_n')$. Thus, there only remains to prove the statement when either $\d_1$ or $\d_2$ are in $\im \M (S_n')$.

Actually, if one of them is an element of $\im \M (S_n')$, then so is the other one. Indeed, assume that $\d_1 \in \im \M (S_n')$, then using $\M (S_n')\c_1=\d_2+J\d_1$ and Lemma \ref{lem:conjugation}(\ref{item:im}), we see that $\d_2 \in \im \M (S_n')$.
Thus, assume that $\d_1, \d_2 \in \im \M (S_n')$ and let $\b_1, \b_2$ be respective premimages of them.
Thanks to the recursive description (\ref{eq:2recurs}), we know that the vectors of $\im \M (S_{n+2}')$ are of the form:
\begin{equation}\label{eq:description_im}
\left(
\begin{array}{llll}
\M (S_n')\a_1 + J(\a_1+\a_4) + (\a_2+\a_3)\\
\M (S_n')\a_2 + J(\a_2+\a_3) + (\a_1+\a_4)\\
\M (S_n')\a_3 + J(\a_2+\a_3) + (\a_1+\a_4)\\
\M (S_n')\a_4 + J(\a_1+\a_4) + (\a_2+\a_3)
\end{array}
\right),
\end{equation}
where $\a_1, \a_2, \a_3, \a_4 \in \F_2^{2^{n}}$.
Therefore, set $\a_1 = \a_2 = 0$, $\a_3 = \b_2$ and $\a_4 = \b_1$. The vector $\c':=(\b_2+ J\b_1, J (\b_2 +J \b_1), J (\b_2+J\b_1)+\d_2, (\b_2+J\b_1)+\d_1) \in \im \M (S_{n+2}')$.
Thus, replacing $\c$ by $\c+\c'$, which does not changes its class in $\Ker \M(S_n')/\im \M (S_n')$, one can assume that $\d_1=\d_2=0$.
Thus, from now on, $\c$ is for the form $(\c_1, \c_2, \c_2, \c_1)$ and $\M (S_n')\c_2 = \d_1 + J\d_2 = 0$. Set $\a_2 = J\c_2$ and $\a_1=\a_3=\a_4=0$, the vector $\c'':=(J\c_2, \c_2, \c_2, J\c_2)$ is in $\im \M (S_{n+2}')$. Thus, replacing $\c$ by $\c+\c''$ we get a representative of the form $\c = (\c_1, 0, 0, \c_1)$ with $\c_1 \in \Ker \M (S_n')$.
\end{proof}

\begin{prop}\label{prop:wrong_proof}
The minimum distance of the Quantum code $Q_n$ is:
$$
D_n = 2^{\frac{n-1}{2}}.
$$
\end{prop}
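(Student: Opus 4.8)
The plan is to prove the bound $D_n = 2^{(n-1)/2}$ in two halves: an upper bound by exhibiting an explicit codeword, and a matching lower bound by induction on $n$ using the recursive structure established in Lemma~\ref{lem:recursion_matrix} and the representative normal form of Lemma~\ref{lem:representative}.

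For the upper bound I would first handle the base case $n=3$ directly: from \eqref{eq:MS3} one computes $\Ker \M(S_3')$ and $\im \M(S_3')$ explicitly (the image is $2$-dimensional) and exhibits a vector in $\Ker \M(S_3') \setminus \im \M(S_3')$ of weight $2 = 2^{(3-1)/2}$. For the inductive step, suppose $\c_1 \in \Ker \M(S_n') \setminus \im \M(S_n')$ has weight $2^{(n-1)/2}$; then by Lemma~\ref{lem:representative}(ii) the vector $(\c_1, 0, 0, \c_1)$ lies in $\Ker \M(S_{n+2}')$, it has weight $2^{(n+1)/2}$, and one must check it is not in $\im \M(S_{n+2}')$ — this follows because, using the description \eqref{eq:description_im} of the image, a vector of that shape $(\c_1,0,0,\c_1)$ in the image forces $\c_1 \in \im \M(S_n')$, contradicting the inductive hypothesis. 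This gives $D_{n+2} \leq 2^{(n+1)/2}$.

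For the lower bound, also by induction, take any $\c = (\c_1, \c_2, \c_3, \c_4) \in \Ker \M(S_{n+2}') \setminus \im \M(S_{n+2}')$ of minimum weight and put it into one of the two forms of Lemma~\ref{lem:representative}. In case (i), $\c$ has the shape $(\c_1, \c_2, \c_2+\d_2, \c_1+\d_1)$ with $\d_1, \d_2 \notin \im \M(S_n')$, so in particular $\d_1, \d_2 \in \Ker \M(S_n') \setminus \im \M(S_n')$; then $wt(\c) = wt(\c_1) + wt(\c_2) + wt(\c_2 + \d_2) + wt(\c_1 + \d_1) \geq wt(\d_2) + wt(\d_1) \geq 2 \cdot 2^{(n-1)/2} = 2^{(n+1)/2}$ by the inductive hypothesis applied to $\d_1$ and $\d_2$, using the triangle-type inequality $wt(\c_1) + wt(\c_1 + \d_1) \geq wt(\d_1)$. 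In case (ii), $\c$ has a representative $(\c_1, 0, 0, \c_1)$ with $\c_1 \in \Ker \M(S_n')$, and since $\c \notin \im \M(S_{n+2}')$ we get $\c_1 \notin \im \M(S_n')$ (again via \eqref{eq:description_im}), so $wt(\c) = 2\,wt(\c_1) \geq 2 \cdot 2^{(n-1)/2} = 2^{(n+1)/2}$; one should also note that passing to this representative can only decrease the weight, or rather argue directly on the minimum-weight representative. Combining the two halves yields $D_{n+2} = 2^{(n+1)/2}$.

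The main obstacle I expect is the bookkeeping in case (ii): one must be careful that the weight of the chosen minimum-weight coset representative is genuinely controlled, since Lemma~\ref{lem:representative} only asserts the \emph{existence} of a representative of the stated form, not that it is of minimum weight. The clean way around this is to observe that for the lower bound we may take $\c$ to be a minimum weight vector of the \emph{coset} (i.e. of $\c + \im\M(S_{n+2}')$), so any representative — in particular the normal-form one of Lemma~\ref{lem:representative} — has weight at least $wt(\c)$; combined with the normal form this gives the inequality in the right direction. The second delicate point, which the paper's note flags as having had a gap in the published version, is precisely the step asserting $\c_1 \notin \im\M(S_n')$ from $\c \notin \im\M(S_{n+2}')$ in case (ii); this needs a genuine argument from the explicit shape \eqref{eq:description_im} of $\im\M(S_{n+2}')$ rather than a hand-wave, and it is where I would concentrate the most care.
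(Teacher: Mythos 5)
Your overall strategy coincides with the paper's: induction on odd $n$, the dichotomy of Lemma~\ref{lem:representative}, the triangle inequality $wt(\c_1)+wt(\c_1+\d_1)\geq wt(\d_1)$ in case (i), and an explicit weight-$2^{(n+1)/2}$ witness for the upper bound. The genuine gap is in case (ii) of the lower bound, and it is precisely the gap the authors flag as having been present in the published version. You correctly notice that Lemma~\ref{lem:representative} only guarantees the \emph{existence} of a representative $(\c_1,0,0,\c_1)$, but your proposed fix is an inequality pointing the wrong way: if $\c$ is a minimum-weight element of the coset and $\tilde\c=(\c_1,0,0,\c_1)$ is the normal-form representative, then the two facts $wt(\tilde\c)\geq wt(\c)$ and $wt(\tilde\c)=2\,wt(\c_1)\geq 2^{\frac{n+1}{2}}$ together imply nothing about $wt(\c)$; a large-weight representative never lower-bounds the minimum weight of its coset. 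What is actually required is a lower bound on $wt\bigl((\c_1,0,0,\c_1)+\x\bigr)$ for \emph{every} $\x\in\im\M(S_{n+2}')$, and your write-up contains no argument for this.

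The paper supplies exactly that missing argument: writing $\x$ in the explicit form \eqref{eq:description_im}, it groups the first two components (and, symmetrically, the last two), uses $J$-conjugation (Lemmas~\ref{lem:Jsquare} and~\ref{lem:conjugation}) to rewrite $wt(\M(S_n')\a_2+J\b)$ as $wt(\M(S_n')J\a_2+\b)$, and then applies the triangle inequality to obtain
$$
wt(\c_1+\M(S_n')\a_1+\b)+wt(\M(S_n')\a_2+J\b)\;\geq\; d\bigl(\c_1,\im\M(S_n')\bigr)\;\geq\;2^{\frac{n-1}{2}},
$$
the last step holding because for $\c_1\in\Ker\M(S_n')\setminus\im\M(S_n')$ and any $\y\in\im\M(S_n')$ the sum $\c_1+\y$ again lies in $\Ker\M(S_n')\setminus\im\M(S_n')$, so its weight is at least $D_n$ by induction; summing over the two pairs of components gives $wt(\c+\x)\geq 2^{\frac{n+1}{2}}$. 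Note, by contrast, that case (i) needs no such repair: the classes of $\d_1=\c_1+\c_4$ and $\d_2=\c_2+\c_3$ modulo $\im\M(S_n')$ are invariants of the coset (adding an image element changes $\d_1$ by $\M(S_n')(\a_1+\a_4)$), so \emph{every} representative of a case-(i) coset has the stated shape with $\d_i\notin\im\M(S_n')$ and your Step-1 bound applies to all of them. It is only in case (ii) that the coset contains elements far from the normal form, and that is exactly where your argument breaks down.
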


\begin{proof}
For $n = 3$, using (\ref{eq:MS3}), we can see that the distance of the Quantum code is $2$. Indeed every non zero codeword has weight at least 2 and for instance the word $e_2+e_3 = (01100000)$ is in the kernel of $\M (S_3')$ although it is not a sum of rows.

We show the result by induction on $n$ for $n\geq 3$ odd.
We proceed as follows, first, we show that the distance of $\Ker(\M (S_{n+2}'))\setminus \im \M (S_{n+2}')$ is bounded below by $2^{\frac{n+1}{2}}$. For that, we consider separately the two situations described by  Lemma \ref{lem:representative}. Then, we show that our lower bound for the minimum distance is reached.

\medbreak

\noindent {\it Step 1.} Let $\c \in \Ker \M (S_{n+2}')$.
Assume that we are in the first case of Lemma \ref{lem:representative}.
That is
$\c$ is of the form $(\c_1, \c_2, \c_2+\d_2, \c_1+\d_1)$, where $\d_1$ and $\d_2$ are in $\Ker \M (S_n')\setminus \im (\M (S_n'))$. By induction and by definition of the Quantum distance, we have $wt(\d_i) \geq 2^{\frac{n-1}{2}}$. Using the triangle inequality for the Hamming distance, we get:
\begin{eqnarray*}
wt(\c_1) + wt(\c_1+\d_1) & = & d(0, \c_1) + d(\c_1, \d_1)\\
& \geq & d(0, \d_1)\\
& \geq & 2^{\frac{n-1}{2}}.
\end{eqnarray*}
Applying the same reasoning to $\d_2$, we show that the weight of $\c$ is at least $2^{\frac{n+1}{2}}$.

\medbreak

\noindent {\it Step 2.}
Now, assume that we are in the second case of Lemma \ref{lem:representative}, that is $\c$ has a representative of the form $\c = (\c_1, 0, 0, \c_1)$ with $\c_1 \in \Ker \M (S_n')$.
We first show that the result holds for the representative and then show the general case.
Assume that $\c_1 \in \im \M (S_n')$. Let $\b_1$ be a pre-image of $\c_1$. Set $\a_1 = \a_4 =\b_1$ and $\a_2 = \a_3 = 0$, using (\ref{eq:description_im}), we see that $\c$ is in the image of $\M (S_{n+2}')$ and hence its weight is not involved in the computation of the minimum distance.
Otherwise $\c_1 \in \Ker \M (S_n') \backslash \im \M (S_n')$ and, by definition of the distance, the weight of $\c$ is at least twice the minimum distance of $\Ker \M (S_n')\setminus \im \M (S_n')$, that is, by induction hypothesis, $2^{\frac{n+1}{2}}$.

\medbreak

Now, let us show that the bound holds while adding an element of $\im \M (S_{n+2}')$ to $\c$.
Let $\x$ be an element of $\im \M (S_{n+2}')$. From (\ref{eq:description_im}), it is of the form
$$
\x =
\left(
\begin{array}{llll}
\M (S_n')\a_1 + \b\\
\M (S_n')\a_2 + J\b\\
\M (S_n')\a_3 + J\b\\
\M (S_n')\a_4 + \b
\end{array}
\right)
$$
for $\b := \a_2+\a_3 +J(\a_1+\a_4)$.
We look at the weight of the two first components of $\c+\x$:
\begin{equation}
  \label{eq:truc}
wt(\c_1+\M (S_n')\a_1+\b) + wt(\M (S_n')\a_2 + J\b),
\end{equation}
the two other components can be treated by the very same manner.
Notice that $J$ is a permutation matrix and hence a Hamming automorphism of $\F_2^{2^n}$, then, using Lemmas \ref{lem:Jsquare} and \ref{lem:conjugation}(\ref{item:conj})
\begin{equation}\label{eq:bidule}
wt(\M (S_n')\a_2+J\b) = wt(J \M (S_n')\a_2 +J^2\b)=wt( \M (S_n')J\a_2 + \b).
\end{equation}

\noindent Combining (\ref{eq:truc}) and (\ref{eq:bidule}) and using the triangle inequality, we get

\begin{eqnarray*}
 wt(\c_1+\M (S_n')\a_1+\b) + wt(\M (S_n')\a_2 + J\b) 
& =    & d(\c_1+\M (S_n')\a_1 ,\b) + d(\M (S_n')J\a_2 , \b) \\
& \geq & d(\c_1 + \M(S_n')\a_1 , \M (S_n')J\a_2)\\
& \geq & wt(\c_1 + \M(S_n')(\a_1+J\a_2))\\
& \geq & d(\c_1, \im \M(S_n')).
\end{eqnarray*}
Since $\c_1 \notin \im \M(S_n')$, then, by induction hypothesis,
we have $ d(\c_1, \im \M(S_n')) \geq 2^{\frac{n-1}{2}}$. Thus,
$wt (\c+\x) \geq 2^{\frac{n+1}{2}}$.

\medbreak

\noindent {\it Final Step.}
We now have a lower bound for the minimum distance.
 Actually, the distance of $\Ker \M(S_{n+2}')\setminus \im \M (S_{n+2}')$ is exactly $2^{\frac{n+1}{2}}$. Indeed, let $\d_1, \d_2$ be two minimum weight words in $\Ker \M (S_n')\setminus \im \M (S_n')$, by induction hypothesis, their weights are $2^{\frac{n-1}{2}}$. The vector $(0, 0, \d_2, \d_1)$ is in $\Ker \M (S_{n+2}')$ and its weight is exactly $2^{\frac{n+1}{2}}$. This vector is not in $\im \M (S_n')$ otherwise, using (\ref{eq:description_im})  we would get $\d_1 = \M (S_n')(\a_1+\a_4)$, which yields a contradiction.
\end{proof}

\section{Concluding Remarks}

\begin{itemize}
\item We have proved a lower bound on the minimum distance of the quantum
  code associated with a classical code with MacKay et al.'s
  construction. This bound is in $O(dn^2)$ where $n$ is the length and
  $d$ is the minimum distance of the classical code. This result is
  based on the enumeration of the minimum number of vertices of a big
  codeword of $\C (S_m \cup W)^{\bot}\setminus \C(S_m \cup W)$
  restricted to a ball of radius 4. We have found it difficult to
  extend this enumeration process to larger balls. We conjecture
  however that the minimum distance of the quantum code is in fact exponential in $d$.

\item This family of quantum codes shares some characteristics with
  topological codes \cite{Kit03a, BM06, BM07a}. The minimum distance
  of a stabilizer code defined on a square lattice 
in two dimensions is subjected to the upper bound $D \leq
\sqrt{N}$. MacKay et al's construction can be seen as a topological
code defined on a lattice with growing dimension. Such a stabilizer
code is not {\em a priori} limited by the Bravyi and Terhal bound \cite{BT07}.

\item 
Some quantum LDPC codes lend themselves to fault-tolerant quantum
computing \cite{RH07, RHG07} and this provides extra motivation for
their study. It would be worthwhile to investigate the potential of
the LDPC family investigated in this paper for such a purpose, and it
would therefore be desirable to understand what kind of logical
operations can be implemented on the encoded data without decoding. A
starting point for this research could be based on Lemma \ref{lem:representative} that leads to a representation of the encoded qubits for the quantum code associated with the repetition code.

\end{itemize}

\section*{Acknowledgment}
This work was supported by the French ANR Defis program under contract
ANR-08-EMER-003 (COCQ project). We acknowledge support from the D\'el\'egation G\'en\'erale pour l'Armement (DGA) and from the Centre National de la Recherche Scientifique (CNRS).

\end{document}